\newtheorem{theorem}{Theorem}
\newtheorem{proposition}{Proposition}
\newtheorem{corollary}{Corollary}
\newtheorem{definition}{Definition}
\newcommand{\revised}[1]{\textcolor{black}{{#1}}}
\newcommand{\retouched}[1]{\textcolor{black}{{#1}}}
\newcommand{\tw}{{\rm tw}}  
\newcommand{\pw}{{\rm pw}}  
\newcommand{\vc}{{\rm vc}}  
\newcommand{\ut}{{\rm u}}   
\newcommand{\PP}{{\mathcal P}}  
\title{Computational Complexity of Hedonic Games \\on Sparse Graphs\thanks{This work was partially supported by JSPS KAKENHI Grant Numbers JP17K19960, 17H01698, 19K21537.}}
\author[1]{Tesshu Hanaka}
\author[2]{Hironori Kiya}
\author[2]{Yasuhide Maei}
\author[2]{Hirotaka Ono}
\affil[1]{Chuo University, Tokyo, Japan}
\affil[2]{Nagoya University, Nagoya, Japan}
\date{}
\begin{document}
%



%
\maketitle              
\begin{abstract}

The additively separable hedonic game (ASHG) is a model of coalition formation games on graphs. 
In this paper, we intensively and extensively investigate the computational complexity of finding several desirable solutions, such as a Nash stable solution, a maximum utilitarian solution, and a maximum egalitarian solution in ASHGs on sparse graphs including bounded-degree graphs, bounded-treewidth graphs, and near-planar graphs. For example, we show that finding a maximum egalitarian solution is weakly NP-hard even on graphs of treewidth 2, whereas it can be solvable in polynomial time on trees. Moreover, we give a pseudo fixed parameter algorithm when parameterized by treewidth.

\end{abstract}

\section{Introduction}
In this paper, we investigate the computational complexity of additively separable hedonic games on sparse graphs from the viewpoint of several solution concepts. 

Given the set of agents, the coalition formation game is a model of finding a partition of the set of agents into subsets under a certain criterion, where each of the subsets is called a \emph{coalition}. Such a partition is called a \emph{coalition structure}. The {\emph{hedonic game}} is a variant of coalition formation games, where each agent has the utility associated with his/her joining coalition. In the typical setting, if an agent belongs to a coalition where his/her favorite agents also belong to, his/her utility is high and he/she feels comfortable. Contrarily, if he/she does not like many members in the coalition, his/her utility must be low; since he/she feels uncomfortable, he/she would like to move to another coalition. Although the model of hedonic games is very simple, it is useful to represent many practical situations, such as formation of research team~\cite{RePEc:eee:mateco:v:40:y:2004:i:8:p:869-887}, formation of coalition government~\cite{le2008gamson}, clustering in social networks~\cite {Aziz:2014:FHG:2615731.2615736,McSweeney:2012:GTF:2456719.2457005,olsen2009nash}, multi-agent distributed task assignment~\cite{saad2010hedonic}, and so on.

The  {\em additively separable hedonic game} (ASHG) is a class of hedonic games, where the utility forms an additively separable function. In ASHG, an agent has a certain \emph{valuation} for each of the agents, which represents his/her preference. The valuation could be positive, negative or $0$. If the valuation of agent $u$ for agent $v$ is positive, agent $u$ prefers agent $v$, and if it is negative, agent $u$ does not prefer agent $v$. If it is $0$, agent $u$ has no interest for agent $v$.  The utility of agent $u$ for $u$'s joining coalition $C$ is defined by the sum of valuations of agent $u$ for other agents in $C$. This setting is considered not very but reasonably general. Due to this definition, it can be also defined by an edge-weighted directed graph, where the weight of edge $(u,v)$ represents the valuation of $u$ to $v$. If a valuation is $0$, we can remove the corresponding edge. Note that the undirected setting is possible, and in the case the valuations are symmetric; the valuation of agent $u$ for agent $v$ is always equal to the one of agent $v$ for agent $u$. 

In the study of hedonic games, several solution concepts are considered important and well investigated. 
One of the most natural solution concepts is \emph{maximum utilitarian}, which is so-called a global optimal solution; it is a coalition structure that maximizes the total sum of the utilities of all the agents. The total sum of the utilities is also called \emph{social welfare}.
Another concept of a global optimal solution is \emph{maximum egalitarian}. It maximizes the minimum utility of an agent among all the agents. That is, it makes the unhappiest agent as happy as possible. 
Nash-stability, envy-free and max envy-free are more personalized concepts of the solutions. 
A coalition structure is called \emph{Nash-stable} if no agent has an incentive to move to another coalition from the current joining coalition. Such an incentive to move to another coalition is also called a \emph{deviation}. Agent $u$ feels envious of $v$ if $u$ can increase his/her utility by exchanging the coalitions of $u$ and $v$. A coalition structure is \emph{envy-free} if any agent does not envy any other agent. Furthermore, the best one among the envy-free coalition structures is also meaningful; it is an envy-free coalition structure with maximum social welfare. 
Some other concepts are also considered, though we focus on these concepts in this paper. 
%

Of course, it is not trivial to find a coalition structure satisfying above mentioned solution concepts. 
Ballester studies the computational complexity for finding coalition structures of several concepts including the above mentioned ones~\cite{ballester2004np}. More precisely, he shows that determining whether there is a Nash stable, an individually stable, and a core stable coalition structure is NP-complete.
In~\cite{sung2010computational}, Sung and Dimitrov show that the same results hold for ASHG.
Aziz et al. investigate the computational complexity for many concepts including the above five solution concepts~\cite{Aziz2013}.
In summary, ASHG is unfortunately NP-hard for the above five solution concepts. 
These hardness results are however proven without any assumption about graph structures. For example, some of the proofs suppose that graphs are weighted complete graphs. This might be a problem, because graphs appearing in ASHGs for practical applications are so-called social networks; they are far from weighted complete graphs and known to be rather sparse or tree-like~\cite{DUNBAR1992469,adcock2013tree}.  
What if we restrict the input graphs of  ASHG to sparse graphs? This is the motivation of this research.  


In this paper, we investigate the computational complexity of ASHG on sparse graphs from the above five solution concepts. The sparsity that we consider in this paper is as follows: graphs with bounded degree,  
graphs with bounded treewidth and near-planar graphs. The degree is a very natural parameter that characterizes the sparsity of graphs. In social networks, the degree represents the number of friends, which is usually much smaller than the size of network. The {\emph{treewidth}} is a parameter that represents how tree-like a graph is. As Adcock, Sullivan and Mahoney pointed out in ~\cite{adcock2013tree}, many large social and information networks have tree-like structures, which implies the significance to investigate the computational complexity of ASHG on graphs with bounded treewidth. Near-planar graphs here are $p$-apex graphs. 
A graph $G$ is said to be $p$-apex if $G$ becomes planar after deleting $p$ vertices or fewer vertices.
Near-planarity is less important than the former two in the context of social networks, though it also has many practical applications such as transportation networks. Note that all of these sparsity concepts are represented by parameters, i.e., treewidth, maximum degree and $p$-apex. In that sense, we consider the parameterized complexity of ASHG of several solution concepts in this paper.  

This is not the first work that focuses on the parameterized complexity of ASHG. Peters presents that Nash-stable, Maximum Utilitarian, Maximum Egalitarian and Envy-free coalition structures can be computed in $2^{\tw \Delta^2}n^{O(1)}$ time, where $\tw$ is the treewidth and $\Delta$ is the maximum degree of an input graph~\cite{peters2016graphical}. In other word, it is {\em fixed parameter tractable (FPT)} with respect to treewidth and maximum degree. This implies that if both of the treewidth and the maximum degree are small, we can efficiently find desirable coalition structures. This result raises the following natural question: 
is finding these desirable coalition structures still FPT when parameterized by either the treewidth or the maximum degree?

This paper answers the question from various viewpoints. Different from the case parameterized by treewidth and maximum degree, the time complexity varies depending on the solution concepts. For example, we can compute a maximum utilitarian coalition structure in  ${\tw}^{O(\tw)}n$ time, whereas computing a maximum egalitarian coalition structure is weakly NP-hard even for graphs with treewidth at most $2$. Some other results of ours are summarized in Table \ref{tb:matome}. For more details, see Section \ref{sec:1.contribution}. Also some related results are summarized in Section \ref{sec:1.related}. 

\subsection{Our contribution}\label{sec:1.contribution} 
\begin{table}[tbp]
  \centering
    \caption{Complexity of ASHGs}
  \begin{tabular}{|l|l|c|}
  \hline
Concept & Time complexity to compute &Reference \\ \hline
Nash stable & NP-hard   &~\cite{sung2010computational} \\ 
&  PLS-complete (symm) &~\cite{gairing2010computing} \\ 
 & \textbf{PLS-complete (symm, $\Delta = 7$)} & [Th.\ref{thm:md7}] \\
 & \textbf{${\tw}^{O(\tw)}n$ (symm, FPT by treewidth)} & [Cor.\ref{cor:Nash-degree2}] \\ \hline
Max Utilitarian & strongly NP-hard (symm) &~\cite{Aziz2013} \\ 
 & \textbf{strongly NP-hard (symm, 3-apex)} & [Th.\ref{thm:mu_3apex}] \\ 
 & \textbf{${\tw}^{O(\tw)}n$ (FPT by treewidth)} & [Th.\ref{thm:treewidth:hedonic}] \\ \hline
Max Egalitarian & strongly NP-hard &~\cite{Aziz2013} \\ 
 & \textbf{weakly NP-hard (symm, 2-apex, $\vc= 4$)} & [Th.\ref{thm:vc:Egal}] \\ 
 & \textbf{weakly NP-hard (symm, planar, $\pw = 4$, $\tw = 2$)} & [Th.\ref{thm:tw:Egal}] \\ 
 & \textbf{strongly NP-hard (symm)} & [Th.\ref{thm:Envy-Egal:strongNP}] \\ 
 & \textbf{linear (symm, tree)} & [Th.\ref{thm:tree:Egal}] \\ 
  & \textbf{P (tree)} & [Th.\ref{thm:asym:tree}] \\ 
 & \textbf{$({\tw}W)^{O(\tw)}n$     (pseudo FPT by treewidth)} & [Th.\ref{thm:pseudo_tw:Egal}] \\ \hline
Envy-free &  trivial  &~\cite{Aziz2013} \\  \hline
Max Envy-free  & \textbf{weakly NP-hard (symm, planar, $\vc= 2$, $\tw = 2$)} & [Th.\ref{thm:vc:Envy-free}] \\ 
 & \textbf{strongly NP-hard (symm)} & [Th.\ref{thm:Envy-Egal:strongNP}] \\
 & \textbf{linear  (symm, tree)} & [Th.\ref{thm:tree:Egal}] \\ \hline
  \end{tabular}
\end{table}
\label{tb:matome}


We first study (symmetric) \textsc{Nash stable} on bounded degree graphs.
We show that the problem is PLS-complete even on graphs with maximum degree $7$. 
PLS is a complexity class of a pair of an optimization problem and a local search for it. 
It is originally introduced to capture the difficulty of finding a locally optimal solution of an optimization problem. In the context of hedonic games, a deviation corresponds to an improvement in local search, and thus PLS or PLS-completeness is also used to model the difficulty of finding a stable solution.

We next show that \textsc{Max Utilitarian} is strongly NP-hard on 3-apex graphs, whereas it can be solved in time ${\tw}^{O(\tw)}n$, and hence it is FPT when parameterized by treewidth $\tw$.
For \textsc{Max Envy-free} , we show that  the problem is weakly NP-hard on series-parallel graphs with vertex cover number at most $2$ whereas finding an envy-free partition is trivial~\cite{Aziz2013}. 

Finally, we investigate the computational complexity of \textsc{Max Egalitarian}.
We show that \textsc{Max Egalitarian} is weakly NP-hard on 2-apex graphs with vertex cover number at most $4$ and planer graphs with pathwidth at most $4$ and treewidth at most $2$. 
Moreover, we show that \textsc{Max Egalitarian} and \textsc{Max Envy-free} are strongly NP-hard even if the preferences are symmetric.
In contrast, an egalitarian and envy-free partition with maximum social welfare can be found in linear time on trees if the preferences are symmetric. Moreover, \textsc{Max Egalitarian} can be computed in polynomial time even if the preferences are asymmetric.
In the end of this paper, we give a pseudo FPT algorithm when parameterized by treewidth.

\subsection{Related work}\label{sec:1.related}
The {\em coalition formation game} is first introduced by Dreze and Greenber~\cite{dreze1980hedonic} in the  field of Economics. Based on the concept of coalition formation games, Banerjee, Konishi and S\"onmez~\cite{banerjee2001core} and Bogomolnaia and Jackson~\cite{bogomolnaia2002stability} study some  stability and core concepts on hedonic games.
For the computational complexity on hedonic games, Ballester shows that finding several coalition structures including Nash stable, core stable, and individually stable coalition structures is NP-complete~\cite{ballester2004np}. 
For ASHGs, Aziz et al. investigate the computational complexity of finding several desirable coalition structures~\cite{Aziz2013}.
 Gairing and Savani~\cite{gairing2010computing} show that computing a Nash stable coalition structure is PLS-complete in symmetric AGHGs whereas Bogomolnaia and Jackson~\cite{bogomolnaia2002stability} prove that a Nash stable coalition structure always exists.
In~\cite{peters2016graphical}, Peters designs parameterized algorithms for computing some coalition structures on hedonic games with respect to treewidth and maximum degree.
\section{Preliminaries}
In this paper, we use the standard graph notations.
For $G=(V,E)$, we define $n=|V|$ and $m=|E|$. 
For $V'\subseteq V$, we denote by $G[V^\prime ]$ the subgraph of $G$ induced by $V^\prime $. 
We denote the closed neighbourhood and the open neighbourhood of a vertex $v$ by $N[v]$ and $N(v)$, respectively.
The degree of $v$ is denoted by $d(v)$. 
Moreover, the maximum degree of $G$ is denoted by $\Delta(G)$. For simplicity, we sometimes omit the subscript $G$.

\subsection{Hedonic game}
An {\em additively separable hedonic game} (ASHG) is defined on a directed edge-weighted graph $G=(V,E,w)$. 
Each vertex $v \in V$ is called an {\em agent}. 
The weight of an edge $e=(u,v)$, denoted by $w_e$ or $w_{uv}$, represents the valuation of $u$ to $v$. 
An ASHG is said to be {\em symmetric} if $w_{uv}=w_{vu}$ holds for any pair of $u$ and $v$.
Any symmetric ASHG can be defined on an {\em undirected} edge-weighted graph.
We denote an undirected edge by $\{u,v\}$.
Note that any edge of weight $0$ is removed from a graph.

Let  $\PP$ be a partition of $V$. Then $C\in \PP$ is called a {\em coalition}. We denote by $C_u\in \PP$ the coalition to which an agent  $u \in V$ belongs under $\PP$, and by $E(C_u)$ the set of edges $\{(u, v)\cup (v,u) \in E \mid v\in C_u\}$. 
In ASHGs, the utility of an agent $u$ under $\PP$ is defined as  ${\ut}_{\PP}(u)=\sum_{v\in N(u)\cap C_u} w_{uv}$, which is the sum of weights of edges from $u$ to other agents in the same coalition. 
Also, the {\em social welfare} of $\PP$ is defined as the sum of utilities of all agents under $\PP$.
Note that the social welfare equals to exactly twice the sum of weights of edges in coalitions.

Next, we define several concepts of desirable solution in ASHGs.
\begin{definition}[Nash-stable]
A partition $\PP$ is Nash-stable if there exists no agent $u$ and coalition $C'\neq C_u$ containing $u$, possibly empty, such that
$$\sum_{v\in N(u)\cap C_u} w_{uv} < \sum_{v\in N(u)\cap C'} w_{uv}.$$
\end{definition}


As an important fact, in any symmetric ASHG, a partition with maximum social welfare is Nash-stable by using the potential function argument~\cite{bogomolnaia2002stability}. 
\begin{proposition}\label{prop:stability_max}
In any symmetric ASHG, a partition with maximum social welfare is Nash-stable.
\end{proposition}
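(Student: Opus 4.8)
The plan is to use the standard potential function argument. The key observation is that in a symmetric ASHG, the social welfare of a partition $\PP$ acts as an exact potential function: whenever an agent deviates to a strictly better coalition, the social welfare strictly increases. Thus, I would argue by contradiction that a partition maximizing social welfare cannot admit any profitable deviation.

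First I would set up the contradiction. Suppose $\PP$ is a partition with maximum social welfare, and suppose for contradiction that $\PP$ is not Nash-stable. Then by definition there exists an agent $u$ and a coalition $C' \neq C_u$ (possibly empty) such that moving $u$ from $C_u$ to $C'$ strictly increases $u$'s utility, i.e.
$$\sum_{v\in N(u)\cap C_u} w_{uv} < \sum_{v\in N(u)\cap C'} w_{uv}.$$
Let $\PP'$ be the partition obtained from $\PP$ by moving $u$ out of $C_u$ and into $C'$, leaving all other agents' coalitions unchanged.

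Next I would compute the change in social welfare from $\PP$ to $\PP'$. Since the social welfare equals exactly twice the sum of weights of edges inside coalitions (as noted in the Preliminaries), I only need to track which edges change their ``inside a coalition'' status. The only edges affected by moving $u$ are those incident to $u$: edges from $u$ to $C_u \setminus \{u\}$ leave their coalition, while edges from $u$ to $C'$ enter a coalition. By symmetry, $w_{uv} = w_{vu}$, so the net change in the sum of in-coalition edge weights is exactly $\sum_{v\in N(u)\cap C'} w_{uv} - \sum_{v\in N(u)\cap C_u} w_{uv}$, which is strictly positive by the deviation inequality. Hence the social welfare of $\PP'$ strictly exceeds that of $\PP$.

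This contradicts the maximality of the social welfare of $\PP$, completing the argument. The main subtlety to get right is the bookkeeping in the welfare-difference computation: one must use symmetry to ensure that each affected edge contributes the same amount whether viewed from $u$ or from its other endpoint, so that the change in total (undirected) edge weight inside coalitions matches exactly the change in $u$'s individual utility. I do not anticipate any genuine obstacle, as the identification of social welfare with a potential function is direct once symmetry is invoked.
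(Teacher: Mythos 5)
Your proof is correct and is exactly the potential-function argument the paper invokes: the paper states this proposition without a written proof, citing Bogomolnaia and Jackson for the same reasoning, namely that in the symmetric case the social welfare (twice the in-coalition edge weight) is an exact potential, so a profitable deviation would strictly increase it. Your bookkeeping of the affected edges and the use of symmetry are both accurate, so nothing is missing.
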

Thus, if we can compute a partition with maximum social welfare in a symmetric ASHG, then we also obtain a Nash-stable partition.

\begin{definition}[Envy-free]
We say an agent $u_1\in C_{u_1}$ {\em envies} $u_2\in C_{u_2}$ if the following inequality holds:
\begin{align*}
\sum_{v\in N(u_1)\cap C_{u_1}} w_{u_1 v} <\sum_{v\in N(u_1)\cap (C_{u_2} \backslash \{u_2\} \cup \{u_1\})} w_{u_1 v}.
\end{align*}
That is, $u_1$ envies $u_2$ if the utility of $u_1$ increases by replacing  $u_2$ by $u_1$.
A partition $\PP$ is {\em envy-free} if any agent does not envy an agent.
\end{definition}

\textsc{Nash-stable}, \textsc{Envy-free}, \textsc{Max Envy-free}, \textsc{Max Utilitarian}, and \textsc{Max Egalitarian} are the following problems: Given a weighted graph $G=(V,E,w)$, find a Nash-stable partition, an envy-free partition, an envy-free partition with maximum social welfare, a maximum utilitarian partition, and a maximum egalitarian partition, respectively.






\subsection{Graph classes}
A \retouched{{\em planar}} graph is a graph that can be drawn on the plane in such a way that its edges intersect only at their endpoints. 
For $p\ge 1$, a {\em $p$-apex graph} is a graph that can be planar by removing $p$ vertices  or fewer vertices from it.
Note that a planar graph is a $p$-apex graph for any $p\ge 1$.
A graph $G$ is called {\em series-parallel} if every 2-connected component of $G$ can be constructed by applying {\em series operation} and {\em parallel operation} compositions recursively: The series operation entails subdividing an edge by a new vertex (replacing an edge by two edges in series). The parallel operation entails replacing an edge by two edges in parallel.
It is well-known that a series-parallel graph is planar, and the class of series-parallel graph is equivalent to graphs with treewidth 2.

\subsection{Graph parameters and parameterized complexity}
For the basic definitions of parameterized complexity, such as the classes FPT and XP, refer to~\cite{Cygan2015}.

\begin{definition}[Tree decomposition]
A {\em tree decomposition} of an undirected graph $G=(V,E)$ is defined as 
a pair $\langle {\cal X}, T\rangle$, where ${\cal
X}=\{X_1,X_2,\ldots, X_N \subseteq V\}$, and $T$ is a tree whose nodes
are labeled by $I \in \{1,2,\ldots,N\}$, such that  
\begin{description}
\item[1.] $\bigcup_{i\in I} X_i =V$,
\item[2.] For all $\{u, v\}\in E$, there exists an $X_i$ such that 
	   $\{u, v\} \subseteq X_i$, 
\item[3.] For all $i, j, k \in I$, if $j$ lies on the
	   path from $i$ to $k$ in $T$, then $X_i \cap X_k \subseteq
	   X_j$.   
\end{description}
\revised{Here, $X_i$ is called a {\em bag}.}
The {\em width} of a tree decomposition is defined as $\min_{i\in I} |X_i|-1$, that is, minimum size of a bag minus one.
Furthermore, the {\em treewidth} of $G$, denoted by $\tw(G)$, is minimum possible width of a tree decomposition of $G$.
A tree decomposition $\langle {\cal X}, T\rangle$ is called a {\em path decomposition} if $T$ is a path. 
The {\em pathwidth} of $G$, denoted by $\pw(G)$, is minimum possible width of a path decomposition of $G$.
\end{definition}

We introduce a special type of tree decomposition, a {\em nice tree decomposition}, introduced by Kloks~\cite{Kloks1994}. It is a special binary tree decomposition which has four types of nodes, named {\em leaf}, {\em introduce vertex}, {\em forget}  and {\em join}. 
In~\cite{Cygan2015,Cygan2011}, Cygan et al. added a fifth type, the {\em introduce edge} node.

\begin{definition}[Nice tree decomposition]
 A tree decomposition  $\langle {\cal X}, T\rangle$ is called a  {\em nice tree decomposition} if it satisfies the following:
\begin{description}
\item[1.] $T$ is rooted at a designated node $r \in {I}$ satisfying $|X_r|=0$, called 
the {\em root node}, 
\item[2.] Each node of the tree $T$ has at most two children, 
\item[3.] Each node in $T$ has one of the following five types:
\begin{itemize}
\item A {\em leaf} node $i$ which has no children and its bag $X_i$ satisfies $|X_i| = 0$, 
\item An {\em introduce vertex} node $i$  has one child $j$ with $X_i = X_j \cup \{v\} $ for a vertex $v\in V$, 
\item An {\em introduce edge} node $i$ has one child $j$ and  labeled with an edge $(u, v) \in E$ where $u, v \in X_i = X_j$, 
\item A {\em forget} node $i$ has one child $j$ and satisfies $X_i = X_j \setminus \{v\}$ for
a vertex $v \in V$, 
\item A {\em join} node $i$ has two children nodes $j_1, j_2$ and satisfies $X_i = X_{j_1} = X_{j_2}$.
\end{itemize} 
\end{description}
\end{definition}
Any tree decomposition of width $\omega$ can be transformed into a nice tree decomposition of  $\omega$ with $O(n)$ nodes in linear time~\cite{Cygan2015}.

A {\em vertex cover} $S$ is the set of vertices such that every edge has at least one vertex in $S$. 
The size of minimum vertex cover in $G$ is called {\em vertex cover number}, denoted by $\vc(G)$.
The following proposition is a well-known relationship between treewidth, pathwidth, and vertex cover number.
\begin{proposition}\label{prop:graph_parameter}
For any graph $G$, it holds that $\tw(G)\le \pw(G)\le \vc(G)$.
\end{proposition}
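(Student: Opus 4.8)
The plan is to prove the two inequalities separately, as each rests on a different elementary observation.

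For $\tw(G) \le \pw(G)$ I would argue straight from the definitions. A path decomposition is by definition a tree decomposition in which the underlying tree $T$ happens to be a path. Hence every path decomposition is in particular a tree decomposition, and $\tw(G)$, being the minimum width over the larger family of \emph{all} tree decompositions, cannot exceed $\pw(G)$, the minimum width over the smaller family of path decompositions. No construction is needed.

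For $\pw(G) \le \vc(G)$ I would exhibit an explicit path decomposition of width $\vc(G)$. Let $S$ be a minimum vertex cover, put $k = |S| = \vc(G)$, and enumerate the remaining vertices as $V \setminus S = \{v_1, \ldots, v_{n-k}\}$. The key structural fact is that, since $S$ covers every edge, $V \setminus S$ is an \emph{independent set}. I would then take the path decomposition whose $i$-th bag is $X_i = S \cup \{v_i\}$, with the bags arranged consecutively in the order $X_1, X_2, \ldots, X_{n-k}$ along a path. Each bag has size $k+1$, so the width is $(k+1)-1 = k$, as desired. (If $V = S$, a single bag $X_1 = S$ works, with width $k-1 \le k$.)

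The only remaining work is to verify the three tree-decomposition axioms for this path, and this is exactly where the two facts about $S$ enter. Vertex coverage holds because $S$ lies in every bag while each $v_i$ lies in $X_i$. Coherence holds because each $v \in S$ occupies every bag (a connected subpath) and each $v_i$ occupies only the single bag $X_i$. Edge coverage is the sole condition with any content: an arbitrary edge has at least one endpoint in $S$; if both endpoints lie in $S$ then any bag contains it, and if exactly one endpoint is some $v_i \notin S$ then $X_i$ contains both $v_i$ and its neighbour in $S$ — the case of two endpoints outside $S$ cannot arise precisely because $V \setminus S$ is independent. I anticipate no genuine obstacle here; the points needing care are merely the degenerate cases (an edgeless graph, or $V = S$) and remembering that the width is the maximum bag size minus one.
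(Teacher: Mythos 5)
Your proof is correct. The paper states this proposition without proof, treating it as a well-known fact, so there is no argument of theirs to compare against; your two-step argument (path decompositions form a subfamily of tree decompositions, and the explicit decomposition with bags $S \cup \{v_i\}$ for a minimum vertex cover $S$) is exactly the standard one, and your verification of the three axioms, including the observation that $V \setminus S$ is independent so no edge can avoid $S$, is complete. The only incidental remark: you correctly take the width to be the \emph{maximum} bag size minus one, whereas the paper's definition misstates it as the minimum --- your reading is the right one.
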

 
In~\cite{Fomin2004}, Fomin and Thilikos proved that for any planar graph $G$, $\tw(G)\le 3.183\sqrt{n}-1$ and a tree decomposition of such width can be computed in polynomial time. Using this fact, we obtain the following proposition for $p$-apex graphs.
\begin{proposition}\label{prop:apex:tw}
Let $p$ be some constant. For any $p$-apex graph $G$, $\tw(G)\le 3.183\sqrt{n}+p-1$. Moreover, a tree decomposition of such width can be computed in polynomial time.
\end{proposition}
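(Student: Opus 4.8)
The plan is to reduce the problem to the planar case via the standard ``add the apex set to every bag'' construction. Let $G$ be a $p$-apex graph on $n$ vertices. By definition there is a set $S \subseteq V$ with $|S| \le p$ such that $G - S$ is planar. First I would locate such a set $S$: since $p$ is a constant, one can enumerate all $O(n^p)$ subsets of $V$ of size at most $p$ and test each resulting induced subgraph for planarity in polynomial time, keeping any subset whose deletion yields a planar graph. This brute-force search runs in polynomial time for fixed $p$ and produces a valid apex set $S$.

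Next I would apply the Fomin--Thilikos bound stated above to the planar graph $G - S$, which has at most $n$ vertices. This yields, in polynomial time, a tree decomposition $\langle \mathcal{X}, T \rangle$ of $G - S$ of width at most $3.183\sqrt{n} - 1$; equivalently, every bag has size at most $3.183\sqrt{n}$.

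The key step is then to modify this decomposition by inserting $S$ into every bag: set $X_i' = X_i \cup S$ for each node $i$ of $T$. I would verify that $\langle \mathcal{X}', T \rangle$ is a valid tree decomposition of $G$ by checking the three defining conditions. The union of the new bags is $(V \setminus S) \cup S = V$. Every edge of $G - S$ is already covered by the original decomposition, while every edge incident to a vertex of $S$ is covered because $S$ lies in all bags. Finally, for each vertex the set of bags containing it is either unchanged (for vertices outside $S$) or equal to all of $T$ (for vertices in $S$), so the connectivity condition is preserved. Adding $|S| \le p$ vertices to each bag increases the maximum bag size by at most $p$, hence the width of the new decomposition is at most $3.183\sqrt{n} + p - 1$, as claimed.

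The construction itself is routine; the only point needing real attention is the polynomial-time computation of an apex set, and I expect this to be the main (mild) obstacle. It is handled cleanly here only because $p$ is assumed constant, which makes the $O(n^p)$ enumeration polynomial. If $p$ were part of the input, this step would no longer be polynomial, and one would instead have to assume the apex set is supplied or invoke a dedicated $p$-apex recognition algorithm.
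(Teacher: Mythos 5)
Your proposal is correct and follows essentially the same route as the paper: brute-force enumeration of the apex set (polynomial for constant $p$), the Fomin--Thilikos decomposition of the resulting planar graph, and insertion of the apex vertices into every bag. Your explicit verification of the three tree-decomposition conditions is slightly more detailed than the paper's, but the argument is the same.
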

\begin{proof}
In ~\cite{Fomin2004}, Fomin and Thilikos proved that for any planar graph $G$, $\tw(G)\le 3.183\sqrt{n}-1$ and a tree decomposition of such width can be computed in polynomial time.
Thus, we first guess $p$ vertices such that $G$ becomes planar by deleting them.
Since we can check whether a graph is planar in time $O(n^2)$~\cite{Kawarabayashi2012},
this can be done in polynomial time by using the brute forth.
Now, we have a planar graph $G'$ obtained from $G$ by deleting such $p$ vertices.
Then we compute a tree decomposition of width $\tw(G')\le 3.183\sqrt{n}-1$ in polynomial time. 
Finally, we add $p$ vertices in $V(G)\setminus V(G')$ to each bag of a tree decomposition.
The width of such a tree decomposition of $G$ is clearly at most $3.183\sqrt{n}+p-1$.
\end{proof}
Proposition \ref{prop:apex:tw}  implies that there is a $2^{O(\sqrt{n}\log n)}$-time algorithm  for any $p$-apex graph if there is a ${\tw}^{O(\tw)}$-time or even an $n^{O(\tw)}$-time algorithm. Therefore, \textsc{Max Utilitarian} and \textsc{Max Egalitarian} with restricted weights can be solved in time  $2^{O(\sqrt{n}\log n)}$ on $p$-apex graphs from Theorems \ref{thm:treewidth:hedonic} and \ref{thm:pseudo_tw:Egal}.

\subsection{Problem list}
In this subsection, we list problems used for the proofs in this paper.
\begin{itemize}
\item \textsc{Max $k$-Cut}: Given an undirected and edge-weighted graph $G=(V,E,w)$, find a partition $(V_1, V_2, \dots ,V_k)$ that maximizes $\sum_{u_1 \in V_i, u_2 \in V_j, V_i \neq V_j} w_{u_1 u_2}$.
\textsc{Max 2-Cut} is known as \textsc{Max-Cut}. 

\item \textsc{$k$-Coloring}:  Given  an undirected graph $G=(V,E)$, determine whether there is a coloring $c: V\rightarrow \{1, \ldots, k\}$ such that $c(u)\neq c(v)$ for every $(u,v)\in E$. 

\item \textsc{Partition}:  Given a finite set of integers $A=\{a_1,a_2,\ldots, a_{n}\}$ and $W=\sum_{i=1}^{n} a_i$, determine whether there is partition $(A_1, A_2)$ of $A$ where $A_1\cup A_2=A$ and  $\sum_{a\in A_1}a = \sum_{a\in A_2}a=W/2$.

\item \textsc{$3$-Partition}:  Given a finite set of integers $A=\{a_1, \ldots, a_{3n}\}$, determine whether there is partition $(A_1, \ldots, A_n)$ such that $|A_i|=3$ and  $\sum_{a\in A_i}a=B$ for each $i$ where $B=\sum_{a\in A} a/n$. 

\end{itemize}
\section{Nash-stable}

Any symmetric ASHG always has a Nash-stable partition by Proposition \ref{prop:stability_max}.
However,  finding a Nash-stable solution is PLS-complete~\cite{gairing2010computing}.
In this section, we prove that \textsc{Nash-Stable}  is PLS-complete even on bounded degree graphs. 


\begin{figure}[tbp]
    \centering
    \includegraphics[width=8cm]{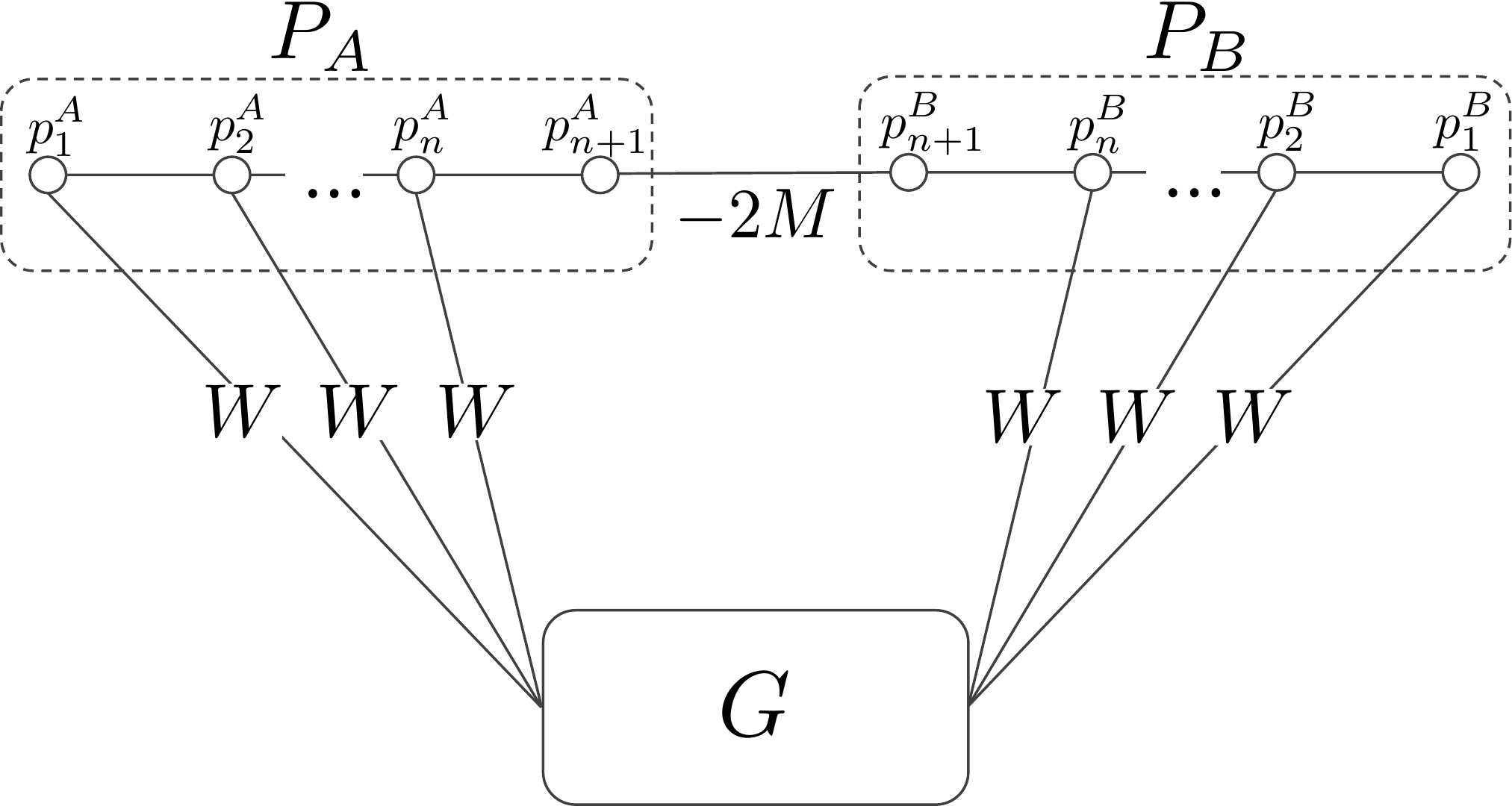}
    \caption{The constructed graph $G'$ in the proof of Theorem \ref{thm:md7}.}
    \label{fig:PLS}
\end{figure}
\begin{theorem}\label{thm:md7}
Symmetric \textsc{Nash-stable}  is PLS-complete even on graphs with maximum degree $\Delta= 7$.
\end{theorem}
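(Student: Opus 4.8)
# Proof Proposal

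The plan is to prove PLS-completeness by a tight reduction from a known PLS-complete problem, and the natural candidate is symmetric \textsc{Nash-stable} on general weighted graphs, which is PLS-complete by Gairing and Savani~\cite{gairing2010computing}. Since membership in PLS is immediate (a partition is a feasible solution, its social welfare is the objective by the potential function argument behind Proposition~\ref{prop:stability_max}, and a single agent deviation is a polynomially checkable improving move), the entire content is to show PLS-hardness while simultaneously enforcing the degree bound $\Delta = 7$. The key idea is a \emph{degree-reduction gadget}: I would replace each high-degree agent $v$ of the original instance by a small tree or cycle of copies $v_1, v_2, \ldots$ of bounded degree, connected internally by edges of very large weight so that in any Nash-stable (equivalently, locally optimal) partition all copies are forced to lie in a single coalition and thus behave as a single ``super-agent''. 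The external edges incident to the original $v$ are then distributed among the copies so that no copy exceeds degree $7$, as suggested by the constructed graph $G'$ in Figure~\ref{fig:PLS}.

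The main steps, in order, would be: first, fix the source instance $G=(V,E,w)$ of symmetric \textsc{Nash-stable}; second, for each vertex $v$ build a connected ``cloud'' of copies whose internal edges carry weight large enough (e.g. exceeding the total absolute weight of all external edges) to guarantee that splitting the cloud across coalitions is never locally optimal; third, route the original incident edges of $v$ to distinct copies, reserving enough degree at each copy for the internal connector edges, so the total degree stays within $7$; fourth, establish the PLS-reduction by exhibiting the two required polynomial maps — one sending instances $G$ to the constructed $G'$, and a solution map sending each Nash-stable partition of $G'$ back to a Nash-stable partition of $G$ — and proving that local optima correspond. The correspondence argument is where I must show that in any locally optimal partition of $G'$ each cloud is monolithic, so that collapsing each cloud to its original vertex yields a well-defined partition of $V$, and that a beneficial single-agent deviation in $G$ lifts to a beneficial deviation (of one copy, or of the whole cloud realized by a sequence of single-copy moves) in $G'$, and vice versa.

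The hard part will be the degree bookkeeping under the constraint $\Delta = 7$ while keeping the reduction a genuine PLS-reduction rather than a mere polynomial-time many-one reduction. It is not enough that optimal solutions correspond; I must ensure that the \emph{local} search structures match, i.e. that every improving move on one side maps to an improving move on the other, so that strict local optima (Nash-stable partitions) are preserved in both directions. The delicate issue is that a single profitable deviation of the super-agent in $G$ might require moving several copies in $G'$, which is not a single local-search step; I would handle this by arguing that the heavy internal weights make any partial move (moving only some copies out of a monolithic cloud) strictly worsening, so the only relevant local moves are those that keep clouds together, and then a single-copy move of a ``representative'' copy can be shown to be improving in $G'$ exactly when the corresponding super-agent move is improving in $G$. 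Choosing the copy-graph topology and the incident-edge distribution so that this representative-copy argument goes through with the degree capped at exactly $7$ — and confirming the residual degree of $7$ is both necessary and sufficient for the gadget — is the crux of the construction.
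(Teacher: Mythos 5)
There is a genuine gap, and it sits exactly where you flagged the ``crux'': the degree-reduction gadget is incompatible with the local-search (single-deviation) semantics of the problem. If each cloud is held together by internal edges whose weight exceeds the total absolute external weight, then \emph{no single copy ever has an improving deviation out of a monolithic cloud}: a lone copy that leaves forfeits its heavy internal edges and gains at most the external weight, so the move is strictly worsening. Consequently \emph{every} partition of $G'$ in which all clouds are monolithic is Nash-stable, regardless of whether the corresponding partition of $G$ is. The solution map then sends local optima of $G'$ to non-local-optima of $G$, which violates the defining property of a PLS-reduction. Your proposed fix via a ``representative copy'' cannot work for the same reason: that copy sees only the external edges routed to it (not the super-agent's full neighborhood), and it cannot escape the internal anchor edges; and realizing the super-agent's move as a sequence of single-copy moves is not a legal local-search step and in any case passes through strictly worsening intermediate states. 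This failure of naive vertex-splitting for \emph{local} optimization is precisely why bounded-degree PLS-hardness of local Max-Cut required a dedicated construction in the first place.

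The paper avoids this trap by importing the bounded-degree hardness from the source problem rather than manufacturing it: it reduces from \textsc{Local Max-Cut} with the flip neighborhood, which is PLS-complete already on graphs of maximum degree $5$~\cite{elsasser2011settling}. Edge weights are negated (so agents want to avoid their cut-neighbors), and two anchor paths $P_A$, $P_B$ with escalating internal weights and a large negative edge between their endpoints force exactly two coalitions $C_A$ and $C_B$ in any Nash-stable partition; each original vertex gains only two new edges (one to each path), giving $\Delta = 5 + 2 = 7$, and single-vertex flips in the cut instance correspond exactly to single-agent deviations between $C_A$ and $C_B$. If you want to rescue your approach, you would essentially have to re-prove a bounded-degree PLS-hardness result for the source problem, at which point you are back to the paper's strategy.
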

\begin{proof}
We give a reduction from \textsc{Local Max-Cut} with flip.
\textsc{Local Max-Cut} is a local search problem of \textsc{Max-Cut}. In the flip neighborhood, two solutions are neighbors if one can be obtained from the other by moving one element to the other set.
\textsc{Local Max-Cut} with flip is PLS-complete on graphs with maximum degree $\Delta= 5$~\cite{elsasser2011settling}.

Given an edge weighted graph $G=(V,E,w)$, we construct $G'=(V',E',w')$ as follows.
Let $W=\sum_{e \in E} |w_e|+1$ and $M= nW +1$.  
First we set $w'_e=-w_e$ for every $e\in E$.
Then we add  $P_A=\{p^A_1, \ldots, p^A_{n+1}\}$ and $P_B=\{p^B_1, \ldots, p^B_{n+1}\}$ that form paths of length $n$, respectively. 
For $1\le i\le n$, we define $w'_{p^A_i p^A_{i+1}} =w'_{p^B_i p^B_{i+1}}=i (W +1 )$ as the weight of $\{p^A_i, p^A_{i+1}\}$ and $\{p^B_i, p^B_{i+1}\}$.
We connect each $p_i^A$ and $p_i^B$ to $v_i\in V$ by an edge of weight  $W$, respectively. 
Finally, we connect $p_{n+1}^A$ and  $p_{n+1}^B$ by an edge of weight $-3M$.
Note that $P_A \cup P_B$ forms a path of length $2n+1$.
We can observe that  the degree of a vertex in $V$ is at most $7$ and in $P_A \cup P_B$ is at most $3$.

In any Nash-stable partition in $ G^\prime $, $ p_{n + 1}^A $ and $ p_{n + 1}^B $ must be in different coalitions.
If not so,  the utility of $ p_ {n + 1} ^ A $ is at most $ -3 M + n (W + 1) ＋W <0 $, and it has an incentive to deviate to a singleton because the utility in a singleton is $0$.
Moreover, in any Nash-stable partition of $ G^\prime $, every vertex in $P_A$ always belong to the same coalition. 
Otherwise, there is an agent $p^A_i $ with utility at most $ (i-1) (W + 1) + W = i (W + 1) -1 $. 
Then the utility of $p^A_i $ can be increased to at least $ i (W + 1) $ by deviating to the coalition to which $ p^A_{i + 1} $ belongs.
Similarly, every vertex in $P_B$ always belong to the same coalition in  any Nash-stable partition of $ G^\prime $.
Therefore, in any Nash-stable partition, there exist a coalition $ C_A $ containing all vertices included in $ P_A $ and a coalition $ C_B $ containing all vertices included in $ P_B $.
Furthermore, if a vertex in $V$ does not belong to neither $C_A$ nor $C_B$, the utility is at most $0$.
Since the utility of the vertex in $C_A$ or $C_B$ is more than $0$, it must belong to either $C_A$ or $C_B$.
Thus, any Nash-stable partition $ \PP ^ *$ in $ G^\prime $ has exactly two coalitions $C_A$ containing $P_A$ and $C_B$ containing $P_B$.

Here, we can observe that any Nash-stable partition in $ G^\prime $ is a local optimal solution of \textsc {Local Max-Cut} in $G$. If not so, there is a vertex $ v \in V $ that can increase the weight of a cut in $G$ by flipping $v$ from the current set to the other.
In $G^\prime $, such a vertex deviates to  the other coalition because the utility increases.
Conversely,  we are given a local optimal cut $(C_1, C_2)$ of \textsc {Local Max-Cut}. Then a partition $(C_1\cup P_A, C_2\cup P_B)$ in $G'$ is a \textsc{Nash-stable} partition.
This is because any vertex in $ P_A$ and  $P_B $  does not deviate from the current coalition.
Moreover, suppose that there is a vertex $v$ in $ C_1 \cup C_2 =V $  deviates to the other coalition to increase the utility. This implies that there is a vertex $v$ in $G$ that increases the weight by flipping $v$.
This contradicts the optimality of $(C_1, C_2)$.
\end{proof}

\section{Max Utilitarian}

\begin{theorem}\label{thm:mu_3apex}
\textsc{Max Utilitarian} is strongly NP-hard on 3-apex graphs even if the preferences are symmetric.
\end{theorem}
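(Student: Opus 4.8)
The plan is to reduce from \textsc{$3$-Coloring} restricted to planar graphs, which is NP-complete, and to encode the three colours by three \emph{apex} vertices. Given a planar graph $G=(V,E)$ with $n=|V|$, I build a symmetric ASHG on the vertex set $V'=V\cup\{c_1,c_2,c_3\}$ as follows. Set $M=|E|+1$ and $L=3nM+1$. Place each colour pair $\{c_i,c_j\}$ at weight $-L$; join every $v\in V$ to each $c_i$ with weight $M$; and give every original edge $\{u,v\}\in E$ weight $-1$. All weights are bounded by a polynomial in the size of $G$, every edge is undirected so the game is symmetric, and deleting the three vertices $c_1,c_2,c_3$ leaves exactly $G$, which is planar; hence $G'$ is a $3$-apex graph, making the instance suitable for a \emph{strong} NP-hardness claim.

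The correctness rests on showing that, for these weights, every maximum-welfare partition $\PP$ is essentially a colouring; throughout I use that the social welfare equals twice the total weight of the edges lying inside coalitions. First, the three colour vertices lie in three distinct coalitions: if some pair $\{c_i,c_j\}$ ended up inside a coalition, the total inside weight would be at most $3nM-L<0$ (the positive weights sum to $3nM<L$), whereas the partition that puts all of $V$ together with $c_1$ already has positive welfare, so no partition forcing a $-L$ edge inside can be optimal. Second, each $v\in V$ lies in exactly one $c_i$'s coalition: since the colour vertices are separated, $v$'s coalition meets at most one $c_i$, and if it met none then moving $v$ into any colour coalition would change the inside weight by at least $M-\deg_G(v)\ge M-|E|=1>0$, contradicting optimality. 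Thus $\PP$ restricted to $V$ partitions $V$ into three colour classes, the inside edges are the $n$ colour edges (total $nM$) together with the monochromatic edges of $G$ (each contributing $-1$), and the welfare equals $2Mn-2\mu(\PP)$, where $\mu(\PP)$ counts monochromatic edges. Consequently the maximum welfare is exactly $2Mn$ if and only if $G$ admits a proper $3$-colouring, giving both directions of the reduction.

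I expect the delicate part to be the structural analysis guaranteeing that an optimum really is a $3$-colouring, rather than the arithmetic: one must rule out partitions that beat a colouring by grouping several uncoloured vertices separately or by sacrificing a colour edge to cut many penalty edges. The two gaps $M>|E|$ and $L>3nM$ are precisely what eliminate these possibilities, and I would isolate each of the two claims above as a short lemma before assembling the equivalence. It then remains only to observe that all weights are polynomially bounded, so the reduction establishes \emph{strong} NP-hardness, and that symmetry and the $3$-apex property hold by construction.
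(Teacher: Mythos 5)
Your proposal is correct and follows essentially the same route as the paper: three mutually repelling apex vertices attached to every original vertex by large positive weights, with original edges given weight $-1$, so that an optimal partition is forced to be a three-way split of $V$ and the welfare records the (non-)monochromatic edges. The only cosmetic difference is that the paper passes through \textsc{Max $3$-Cut} on planar graphs as an explicit intermediate problem and counts cut edges, whereas you reduce directly from \textsc{$3$-Coloring} and count monochromatic edges; the two bookkeepings are equivalent and your exchange arguments and weight gaps ($M>|E|$, $L>3nM$) match the role of the paper's choices $m$ and $6mn$.
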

\begin{proof}
Let $G=(V,E)$ be an unweighted graph where $|V|=n$ and $|E|=m$.
We first confirm that \textsc{Max $3$-Cut} is NP-hard for planar graphs.
This can be done by a reduction from \textsc{$3$-coloring} on planar graphs which is NP-complete~\cite{GJS76}.
If an unweighted graph $G$ is 3-colorable, it is clear that $G$ has a 3-cut of size $m$ because for every edge, its endpoints have different colors. 
On the other hand, if an unweighted graph $G$ has a 3-cut of size $m$, it is obviously 3-colorable.

Then we give a reduction from \textsc{Max $3$-Cut} to  \textsc{Max Utilitarian}.
Given an unweighted planar graph $G=(V,E)$ of an instance of  \textsc{Max $3$-Cut}, we add three super vertices $S=\{s_1,s_2,s_3\}$ such that and each $s_i$ is connected to all vertices in $G$ by three edges of weight $m$.
Moreover, we connect $s_1,s_2,s_3$ to each other by edges of weight $-6mn$, and hence $S$ forms a clique.
Finally, for each edge $e\in E$, we define the weight  $w_e=-1$.
Let $G'$ be the constructed graph.

In the following, we show that there is a 3-cut of size $k$ in $G$ if and only if there is a partition $\PP$ with social welfare $2(3mn-m+k)$.
Given a 3-cut $(V_1,V_2,V_3)$ of size $k$ in $G$, we construct a partition $\PP=\{V_1\cup \{s_1\},V_2\cup \{s_2\},V_3\cup \{s_3\}\}$.
Since the number of edges in $E$ in coalitions is $m-k$, any edge in  $G[S]$ is not contained in coalitions, and every edge between $s\in S$ and $v\in V$ is contained in coalitions, the social welfare of $\PP$ is $2(3mn-m+k)$.

Conversely, we are given a partition $\PP$ of social welfare $2(3mn-m+k)$.
If a coalition contains at least two vertices in $S$, the social welfare is at most $6m-6mn<0$.
Thus, each $s_i$ does not belong to the same coalition.
If there is $v\in V$ that does not belong to a coalition containing $s\in S$, then the social welfare is at most $2\cdot 3m(n-1)=6mn-6m<2(3mn-m+k)$.
Hence, $V$ must be partitioned into three sets adjacent to either $s_1$, $s_2$ or $s_3$.
Since the social welfare of $\PP$ is $2(3mn-m+k)$, every edge between $s\in S$ and $v\in V$ is contained in a coalition, and the weight of an edge in $E$ is $-1$, the number of edges in $E$ in coalitions is $m-k$.
This implies that there is a 3-cut of size $k$. 
\end{proof}

\begin{theorem}\label{thm:treewidth:hedonic}
Given a tree decomposition of width ${\tw}$, \textsc{Max Utilitarian} can be solved in time ${\tw}^{O(\tw)} n$.
\end{theorem}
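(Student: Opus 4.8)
The plan is to solve \textsc{Max Utilitarian} by dynamic programming over a nice tree decomposition. First I would recast the objective: the social welfare of a partition equals the sum, over all intra-coalition unordered edges $\{u,v\}$, of the combined weight $w_{uv}+w_{vu}$ (which is $2w_{uv}$ in the symmetric case). Hence maximizing social welfare is equivalent to choosing a partition $\PP$ of $V$ that maximizes the total combined weight of edges both of whose endpoints lie in a common coalition. I would then convert the given width-$\tw$ tree decomposition into a nice tree decomposition with introduce-edge nodes in linear time, so that every edge is introduced exactly once.

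The DP state I would use is, for each node $i$ and each set partition $\pi$ of its bag $X_i$, a value $c[i][\pi]$ equal to the maximum total combined weight of intra-coalition edges among the edges introduced in the subtree rooted at $i$, taken over all assignments of the vertices appearing in that subtree to coalitions whose restriction to $X_i$ is exactly $\pi$ (two bag vertices lie in the same block of $\pi$ iff they are placed in the same coalition). The crucial observation is that only the partition of the current bag, and not any global labelling of coalitions, needs to be remembered: a vertex that has already been forgotten no longer interacts with vertices still to come, because all of its incident edges have already been introduced. The number of states per node is the Bell number $B_{|X_i|}\le (\tw+1)^{\tw+1}=\tw^{O(\tw)}$.

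For the transitions over the five node types: a leaf carries the empty partition with value $0$; an introduce-vertex node places the new vertex $v$ either in a fresh singleton block or into one of the at most $\tw+1$ existing blocks, leaving the value unchanged; an introduce-edge node for $\{u,v\}$ adds $w_{uv}+w_{vu}$ to $c[i][\pi]$ exactly when $u$ and $v$ share a block of $\pi$; a forget node sets $c[i][\pi]=\max_{\pi'} c[j][\pi']$ over all partitions $\pi'$ of $X_j$ that restrict to $\pi$ after deleting the forgotten vertex; and a join node sets $c[i][\pi]=c[j_1][\pi]+c[j_2][\pi]$. The answer is $c[r][\emptyset]$ at the root. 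The step I expect to require the most care is the join, and its correctness rests on the fact that coalitions are arbitrary vertex sets rather than connected subgraphs: two children can therefore be combined merely by matching their bag partitions, with no lattice-style block merging as in connectivity-based DPs. Correctness of simply summing the two child values follows because each edge is introduced in exactly one subtree, and no edge can join a vertex forgotten in the left subtree to one forgotten in the right subtree — such an edge would force a common bag containing both, contradicting the connectivity property of the decomposition (property 3).

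Finally, for the running time I would note that a nice tree decomposition with introduce-edge nodes has $O(\tw\, n)$ nodes, and each transition processes at most $\tw^{O(\tw)}$ partitions with only a $\mathrm{poly}(\tw)$ overhead per partition (for placing a vertex or projecting a partition across a forget). Absorbing these polynomial factors into the exponential term yields an overall bound of $\tw^{O(\tw)} n$, as claimed. The same argument works verbatim for asymmetric preferences, since the combined weight $w_{uv}+w_{vu}$ already accounts for both directions.
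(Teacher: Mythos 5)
Your proof is correct and takes essentially the same approach as the paper: a dynamic program over a nice tree decomposition whose states are the set partitions of the current bag, giving $\tw^{O(\tw)}$ states per node and $\tw^{O(\tw)}n$ overall. The only difference is bookkeeping --- you charge edge weights at introduce-edge nodes so the join is a plain sum, while the paper charges them at introduce-vertex nodes and subtracts the double-counted intra-bag contributions at join nodes; both are valid.
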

\begin{proof}
Our algorithm is based on  dynamic programming on a tree decomposition for connectivity problems such as \textsc{Steiner tree} \cite{Cygan2015}. 
In our dynamic programming, we keep track of all the partitions in each bag.

We define the recursive formulas for computing the social welfare of each partition ${\mathcal P}$ in the subgraph based on a subtree of a tree decomposition. 
Let   ${\mathcal P_i}$ be a partition of $X_i$.
We denote by $A_i[{\mathcal P}_i]$ the maximum social welfare in the subgraph $G_i$ such that $X_i$ is partitioned into ${\mathcal P_i}$. 
Notice that $A_r[\emptyset]$ in root node $r$ is the maximum social welfare of $G$.
We denote a parent node by $i$ and its child node by $j$. For a join node, we write $j_1$ and $j_2$ to denote its two children. 

\paragraph*{Leaf node:}  
In leaf nodes, we set $A_i[\emptyset]=0$.

\paragraph*{Introduce vertex $v$ node:} 
In an introduce vertex $v$ node $i$, let $C_v\in {\mathcal P}_i$ be the coalition including $v$. 
We notice that the social welfare is increased by edges between $v$ and vertices in the coalition including $v$. Thus, the recursive formula is defined as: 
$A_i[{\mathcal P}_i]=A_j[{\mathcal P}_j]+\sum_{u\in N(v)\cap C_v} w_{uv}+\sum_{u\in N(v)\cap C_v} w_{vu}$
where ${\mathcal P}_j = {\mathcal P}_i \setminus \{C_v\} \cup \{C_v\setminus \{v\}\}$.

\paragraph*{Forget $v$ node:}
In a forget $v$ node, we only take a partition with maximum social welfare when we forget $v$ because $v$ does not affect the social welfare hereafter.
Thus, the recursive formula is defined as: 
$A_i[{\mathcal P}_i]=\max_{{\mathcal P}_j\in {\mathcal D}_j} A_j[{\mathcal P}_j]$ 
where ${\mathcal D}_j=\{{\mathcal P}_j \mid {\mathcal P}_j \setminus \{C_v\} \cup \{C_v\setminus \{v\}\}={\mathcal P}_i\}$.

\paragraph*{Join node:}
A join node $i$ has two child nodes $j_1, j_2$ where $X_i=X_{j_1}=X_{j_2}$.
The social welfare of each partition $X_i$ is the sum of the corresponding partition of $X_{j_1}=X_{j_2}$.
Therefore,  the recursive formula for a join node is defined as:
$A_i[{\mathcal P}_i]=A_{j_1}[{\mathcal P}_{i}]+A_{j_2}[{\mathcal P}_{i}]-\sum_{C\in \PP_i} \sum_{u,v\in C} (w_{uv}+w_{vu})$.
The last term means subtracting the double counting of edges.

Because the size of each DP table is ${\tw}^{O(\tw)}$, we can compute the recursive formulas in time ${\tw}^{O(\tw)}$.
As the result,  the total running time is ${\tw}^{O(\tw)}n$. 
\end{proof}

By Proposition \ref{prop:stability_max},  symmetric \textsc{Nash-stable} is also solvable in time  ${\tw}^{O(\tw)} n$.
\begin{corollary}\label{cor:Nash-degree2}
Given a tree decomposition of width ${\tw}$, symmetric \textsc{Nash-stable} can be solved in time ${\tw}^{O(\tw)} n$.
\end{corollary}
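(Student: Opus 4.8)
The plan is to derive the corollary immediately from the two ingredients already established, namely Proposition \ref{prop:stability_max} and Theorem \ref{thm:treewidth:hedonic}. Since the input ASHG is symmetric, Proposition \ref{prop:stability_max} guarantees that any partition maximizing the social welfare is Nash-stable. Hence it suffices to produce one maximum-social-welfare partition: such a partition is automatically a valid output for symmetric \textsc{Nash-stable}, and so the whole task reduces to solving \textsc{Max Utilitarian}.

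To obtain such a partition, I would run the dynamic programming algorithm of Theorem \ref{thm:treewidth:hedonic} on the given tree decomposition of width $\tw$. That algorithm computes $A_r[\emptyset]$, the maximum social welfare of $G$, in time ${\tw}^{O(\tw)}n$. The one point requiring a word beyond quoting the theorem is that we need the partition itself, not merely its value; this is recovered by the standard backtracking through the filled DP tables $A_i[\cdot]$, following at each node the choice (of child partition, or of the merge at a join node) that realized the stored optimum. Backtracking visits each of the $O(n)$ nodes once and inspects ${\tw}^{O(\tw)}$ table entries per node, so it stays within the ${\tw}^{O(\tw)}n$ bound.

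Combining the two steps, we compute a maximum-social-welfare partition in time ${\tw}^{O(\tw)}n$ and return it; by Proposition \ref{prop:stability_max} it is Nash-stable. I expect no genuine obstacle here, since the statement is an immediate corollary. The only points deserving care are that symmetry is essential, because the potential-function argument underlying Proposition \ref{prop:stability_max} fails for asymmetric valuations and the reduction to \textsc{Max Utilitarian} would then break; and that the dynamic program of Theorem \ref{thm:treewidth:hedonic} must be run in its partition-reconstructing form rather than stopping at the optimal social-welfare value.
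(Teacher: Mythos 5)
Your proposal is correct and matches the paper's argument exactly: the paper derives the corollary by combining Proposition \ref{prop:stability_max} with the dynamic program of Theorem \ref{thm:treewidth:hedonic}. Your added remark about recovering the actual partition by standard backtracking through the DP tables is a sound (and welcome) elaboration of a detail the paper leaves implicit.
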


\section{Max Envy-free and Max Egalitarian}

In~\cite{Aziz2013}, Aziz et al. show that finding an envy-free partition is trivial because a partition of singletons is envy-free.
However, finding a {\em maximum} envy-free partition is much more difficult than  finding an envy-free partition.

\begin{theorem}\label{thm:vc:Envy-free}
\textsc{Max Envy-free} is weakly NP-hard on series-parallel graphs of vertex cover number $2$ even if the preferences are symmetric.
\end{theorem}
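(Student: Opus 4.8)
The plan is to reduce from \textsc{Partition}, which is listed above as weakly NP-hard. Given an instance $A=\{a_1,\dots,a_n\}$ with $W=\sum_i a_i$, I would build a symmetric weighted graph $G'$ on the vertex set $\{x,y\}\cup\{v_1,\dots,v_n\}$: put a single edge $\{x,y\}$ of weight $-M$ with $M=W+1$, and for each $i$ add two edges $\{x,v_i\}$ and $\{y,v_i\}$, both of weight $a_i$. Then $\{x,y\}$ is a vertex cover, and it is a minimum one (take the two disjoint edges $\{x,v_1\},\{y,v_2\}$), so $\vc(G')=2$; the underlying graph is $K_{2,n}$ together with the edge $\{x,y\}$, which is series-parallel with terminals $x,y$ (parallel-compose the length-two series paths $x\,v_i\,y$ with the edge $\{x,y\}$), hence $\tw(G')=2$. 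The target equivalence I aim to prove is: $A$ is a yes-instance of \textsc{Partition} if and only if $G'$ admits an envy-free partition of social welfare exactly $2W$.

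First I would determine the shape of maximum-welfare partitions. Since social welfare equals twice the total weight of intra-coalition edges, and the only positive edges have weight $a_i$ while the lone negative edge $\{x,y\}$ has weight $-M$ with $M>W$, one checks that keeping $x,y$ together costs more than it can ever gain: the best intra-coalition weight with $x,y$ together is $2W-M<W$, whereas separating $x,y$ and giving each $v_i$ to one of them attains intra-weight $W$. Thus welfare is maximized exactly when $x$ and $y$ are in different coalitions and every $v_i$ shares a coalition with $x$ or with $y$; this yields welfare precisely $2W$, and nothing larger is attainable. Any welfare-$2W$ partition is therefore encoded by the split of the items into $C_x$ (those placed with $x$) and $C_y$ (those placed with $y$).

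Next I would compute the envy conditions. For the pair $x,y$, the utility of $x$ is $\sum_{v_i\in C_x}a_i$, while replacing $y$ by $x$ inside $C_y$ would give $x$ utility $\sum_{v_i\in C_y}a_i$; hence $x$ does not envy $y$ precisely when $\sum_{C_x}a_i\ge\sum_{C_y}a_i$, and symmetrically for $y$. Envy-freeness of $\{x,y\}$ therefore forces $\sum_{C_x}a_i=\sum_{C_y}a_i=W/2$, a balanced split. For the converse direction I must verify that a balanced split is fully envy-free: each item $v_i$ has utility $a_i$, and since the items form an independent set adjacent only to $x,y$, any swap of $v_i$ into the opposite coalition yields utility at most $a_i$ (with equality when it lands next to $y$, which is not strict envy), so no item envies anyone; and $x$ or $y$ swapping for an item $v_i\in C_y$ would land in a coalition still containing its partner, giving utility at most $W-M<0$. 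This gives both directions: a \textsc{Partition} solution produces a balanced, hence envy-free, welfare-$2W$ partition, and any envy-free welfare-$2W$ partition is balanced and so yields a \textsc{Partition} solution. As all weights are polynomial in the $a_i$, this establishes weak NP-hardness on series-parallel graphs with $\vc=2$ and symmetric preferences.

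The main obstacle is the envy-freeness bookkeeping in the forward direction: envy-freeness must be checked against every ordered pair of agents, not merely $x$ against $y$. The delicate cases are $x$ (or $y$) potentially envying an item in the opposite coalition, which is exactly what the weight $-M$ on $\{x,y\}$ is engineered to forbid, and the boundary cases where an item's post-swap utility equals rather than strictly exceeds its current utility. Setting $M=W+1$ is what makes all of these checks go through simultaneously while keeping the encoding of size polynomial in the input numbers.
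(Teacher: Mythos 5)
Your proposal is correct and follows essentially the same route as the paper's own proof: a reduction from \textsc{Partition} to the graph $K_{2,n}$ plus a single large negative edge between the two hub vertices, with item edges of weight $a_i$, where envy-freeness between the two hubs forces the balanced split and welfare $2W$ forces every item to join one hub. The only differences are cosmetic (you use $-(W+1)$ where the paper uses $-2W-1$, and ``exactly $2W$'' where the paper says ``at least $2W$'', which coincide since $2W$ is the maximum attainable welfare).
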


\begin{figure}[tbp]
    \centering
    \includegraphics[width=6.5cm]{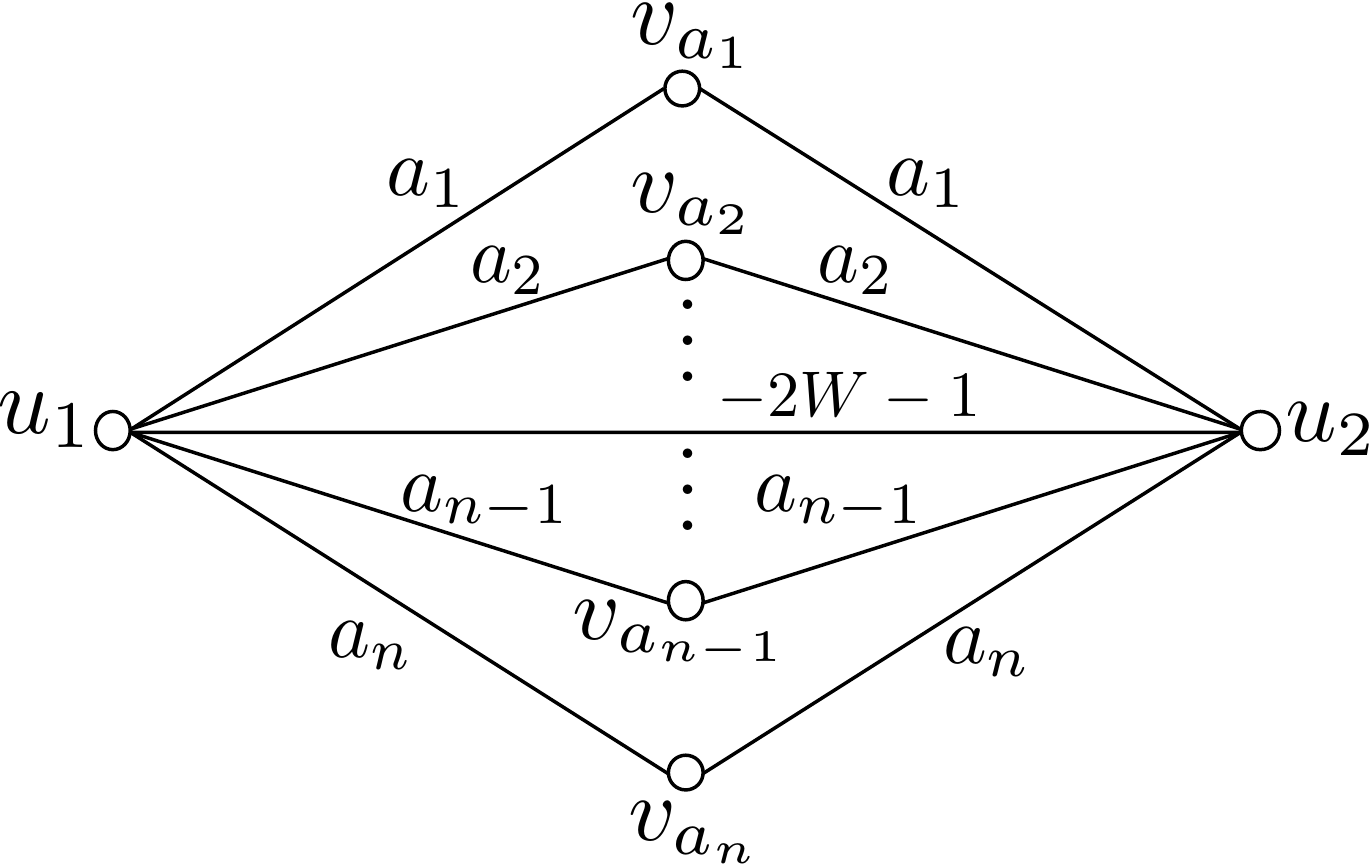}
    \caption{The constructed graph $H$.}
    \label{fig:Envy-free_Partition}
\end{figure}

\begin{proof}
We give a reduction from \textsc{Partition}, which is weakly NP-complete \cite{GJ1979}.
Without loss of generality, we suppose $a_1\le a_2\le\ldots\le a_n$ and $a_n< W/2$. 

Given a set of positive integers $A=\{a_1,a_2,\ldots,a_{n}\}$, we build the corresponding vertex set $V_A=\{v_{a_1}, v_{a_2}, \ldots, v_{a_n}\}$. Then we construct an edge-weighted complete bipartite graph $K_{2,n}=(V_A\cup U,E)$ where  $U=\{u_1,u_2\}$.
For each edge $\{v_{a_i},u_j\}\in E$, we set the weight $w_{v_{a_i}u_j}=a_i$.
Finally, we add  $e_u=\{u_1,u_2\}$ of weight $-2W-1$.
Let $H=(V_A\cup U, E\cup\{e_u\})$ be the constructed graph.
Note that $H$ is a series parallel graph and $\vc(H)=2$ (see Fig. \ref{fig:Envy-free_Partition}).

We show that an instance of \textsc{Partition} is a yes-instance if and only if there is an envy-free partition with social welfare at least $2W$ in $H$. 

Given a partition $(A_1, A_2)$ of $A$ such that $\sum_{a\in A_1}a = \sum_{a\in A_2}a = W/2$, let $V_{A_1}$ and $V_{A_2}$ be the corresponding vertex set to $A_1$ and $A_2$, respectively. In short, $V_{A_{i}}=\{v_{a}\mid a\in A_{i}\}$ for $i\in \{1,2\}$. 
Let  $\PP=\{V_{A_1}\cup \{u_1\}, V_{A_2}\cup \{u_2\}\}$ be a partition in $H$.
By the definition of $H$, we have $\ut_{\PP}(v_a)=a$ for every $v_a\in V_A$ and $\ut_{\PP}(u)=W/2$ for every $u\in U$.
Let $C_i=V_{A_i}\cup \{u_i\}$ for $i\in \{1,2\}$.
For an agent $v_a\in C_i$, consider  $C'_i=C_j\setminus \{w\} \cup \{v_a\}$ for any $w\in C_j\neq C_i$.
In this case, the utility of $v_a$ is at most $a$.
Moreover, consider $C'_1=C_2\setminus \{w\} \cup \{u_1\}$ for any $w\in C_2$.
Then the utility of $u_1$ is at most $W/2$.
Therefore, $\PP$ is envy-free.
Also, the social welfare of $\PP$ is $W/2+W/2+\sum_{a\in A} a= 2W$.

Conversely, we are given an envy-free partition $\PP$ with social welfare at least $2W$ in $H$. 
If $\PP$ has a coalition that contains both $u_1$ and $u_2$, the social welfare of $\PP$ is strictly less than $2W$.
Thus, there are two coalitions $C_1$, $C_2$ in $\PP$ such that $u_1\in C_1$ and $u_2\in C_2$.
For each $a\in A$,  one of $v_a\in C_1$, $v_a\in C_2$, and $v_a\in C'$ holds where $C'\subseteq V_A$ does not contain $u_1$ and $u_2$.
This implies that the social welfare of $\PP$ is at most $2W$ because at most one of $\{v_a, u_1\}$ and $\{v_a, u_2\}$ contributes to the social welfare.
If there is an agent $v_a$  in neither $C'$, the social welfare of $\PP$ is strictly less than $2W$. 
Therefore, it holds that either $v_a\in C_1$ or $v_a\in C_2$ for each $v_a\in V_A$ in  $\PP$.
Suppose that $\ut_{\PP}(u_1)>\ut_{\PP}(u_2)$. Then $u_2$ envies $u_1$ by the definition of $H$.
This contradicts the assumption.
Now, we have $\ut_{\PP}(u_1)=\ut_{\PP}(u_2)=W/2$.
Let ${A_i}=\{a\in A\mid v_a\in N(u_i)\subseteq V_A\}$ for $i\in \{1,2\}$.
Then a partition $\{A_1, A_2\}$ satisfies that $\sum_{a\in A_1}a = \sum_{a\in A_2}a=W/2$.
\end{proof}


\begin{figure}[tbp]
 \begin{minipage}{0.55\hsize}
    \centering
    \includegraphics[width=8.0cm]{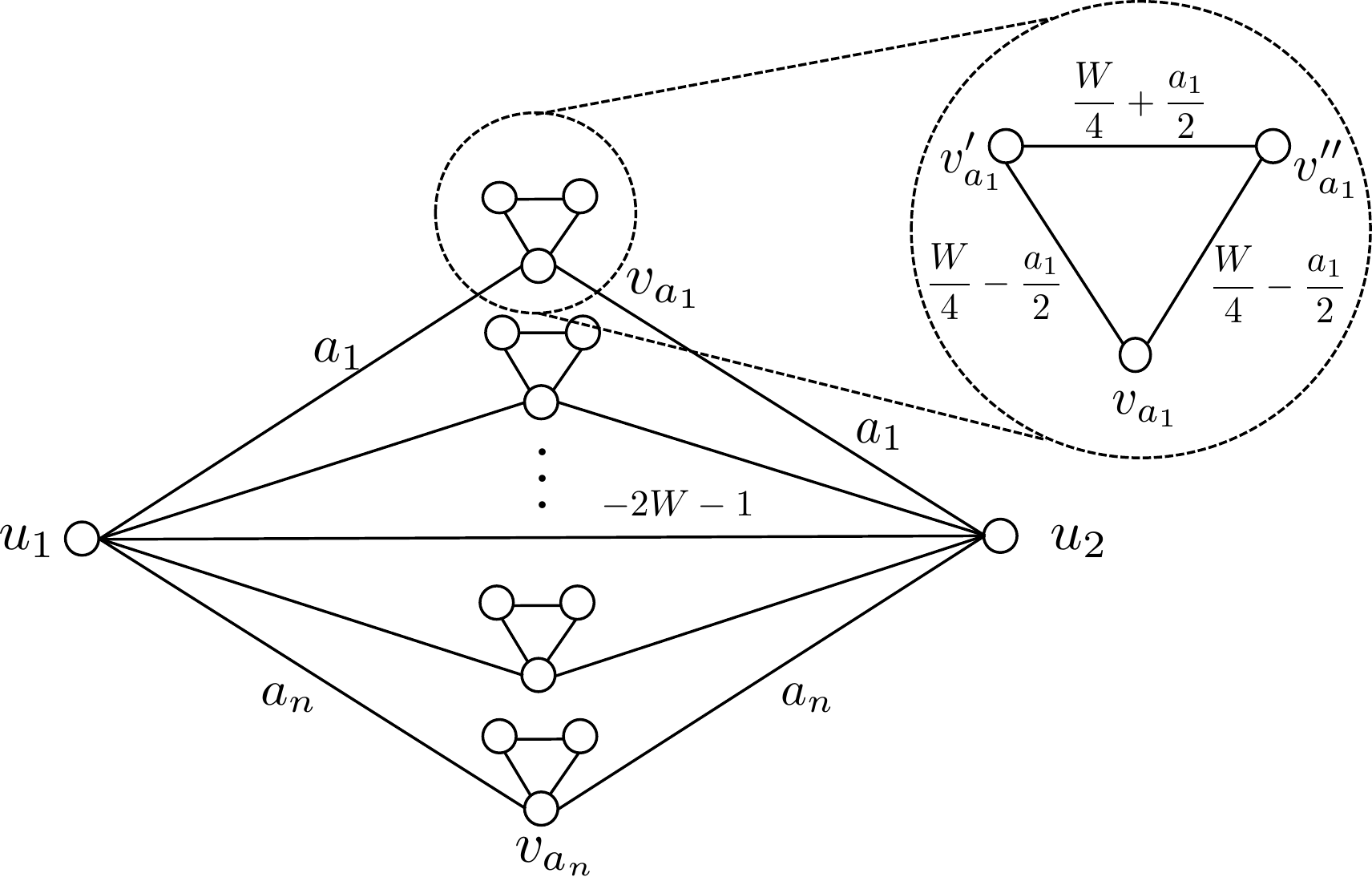}
    \caption{The constructed graph $H'$.}
    \label{fig:Egal_Partition_tw}
 \end{minipage}
 \begin{minipage}{0.47\hsize}
    \centering
    \includegraphics[width=6.1cm]{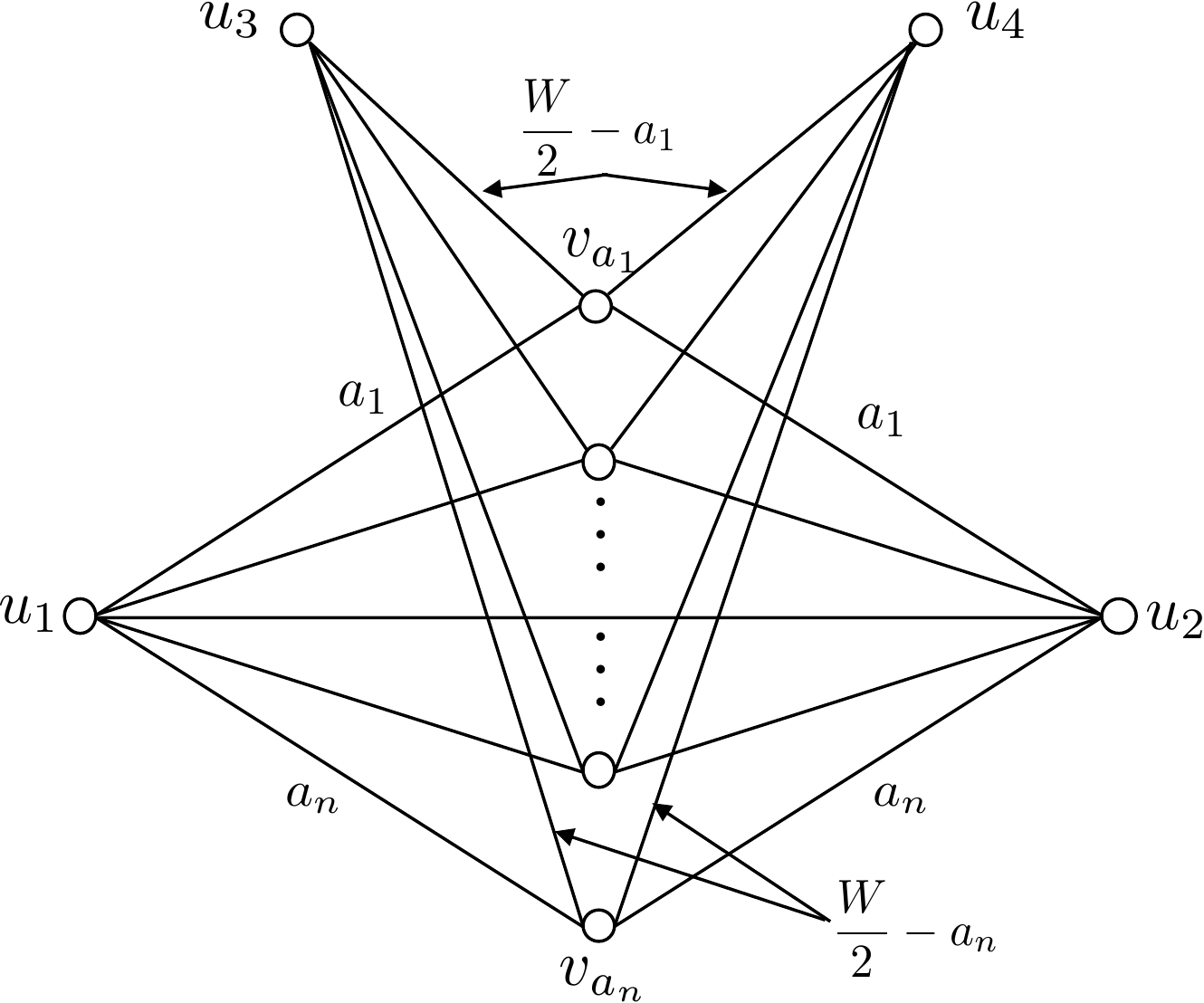}
    \caption{The constructed graph $H''$.}
    \label{fig:Egal_Partition_vc}
 \end{minipage}
\end{figure}


Next, we show that \textsc{Max Egalitarian}  is weakly NP-hard on series-parallel graphs of pathwidth $4$. 
Note that the  class  of  series-parallel graph is equivalent to graphs with treewidth $2$.
\begin{theorem}\label{thm:tw:Egal}
In the symmetric hedonic games, \textsc{Max Egalitarian} is weakly NP-hard on series-parallel graphs of pathwidth $4$  even if the preferences are symmetric.
\end{theorem}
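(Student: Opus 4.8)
The plan is to reduce from \textsc{Partition}, exactly as in the proof of Theorem~\ref{thm:vc:Envy-free}, but with a construction tailored to the egalitarian objective. Given $A=\{a_1,\dots,a_n\}$ with $W=\sum_i a_i$, I would build a symmetric graph $H'$ on the vertex set $\{u_1,u_2\}\cup\{v_1,\dots,v_n\}\cup\{b_1,\dots,b_n\}$ as follows. The two \emph{hubs} $u_1,u_2$ are joined by an edge of large negative weight $-N$, say $N=2W+1$. Each \emph{item vertex} $v_i$ is joined to both hubs by edges of weight $a_i$, so that $\{u_1,u_2,v_i\}$ induces a triangle and the hubs together with the $v_i$'s form a ``book'' of $n$ triangles sharing the edge $\{u_1,u_2\}$. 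Finally, each $v_i$ is given a private \emph{buddy} $b_i$ attached by a single edge $\{v_i,b_i\}$ of weight $B:=W$. The target egalitarian value is $W/2$.

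I would then show that $A$ is a yes-instance if and only if $H'$ admits a partition $\PP$ whose minimum utility is at least $W/2$. For the forward direction, from a balanced split $(A_1,A_2)$ I place $v_i$ and its buddy $b_i$ into the coalition of $u_1$ when $a_i\in A_1$ and of $u_2$ otherwise; then $\ut_{\PP}(u_1)=\ut_{\PP}(u_2)=W/2$, while $\ut_{\PP}(v_i)=a_i+B\ge W/2$ and $\ut_{\PP}(b_i)=B\ge W/2$, so the minimum utility is exactly $W/2$. For the converse the key observation is that the hubs alone force the arithmetic: if the minimum utility is at least $W/2$, then in particular $\ut_{\PP}(u_1),\ut_{\PP}(u_2)\ge W/2$. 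Since $N>W/2$, the two hubs cannot lie in a common coalition (that would make a hub utility at most $-N+W<W/2$), and a hub's utility is then precisely the sum of the weights $a_i$ over the item vertices in its coalition. As the item vertices in the two hub coalitions are disjoint and $\sum_i a_i=W$, the two inequalities $\ut_{\PP}(u_1),\ut_{\PP}(u_2)\ge W/2$ can hold only if every item vertex lies in a hub coalition and the two partial sums are both exactly $W/2$, yielding the desired balanced split. Notice that the converse uses no information about the item or buddy utilities; the buddies exist only to keep the item vertices from being the bottleneck in the forward direction.

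It then remains to verify the structural parameters. The book of triangles on $\{u_1,u_2\}\cup\{v_1,\dots,v_n\}$ is series-parallel — build the edge $\{u_1,u_2\}$, then repeatedly add a parallel edge and subdivide it to create each $v_i$ — and appending the degree-one buddies $b_i$ introduces no $K_4$ minor, so $\tw(H')=2$ and $H'$ is series-parallel in the sense used in this paper. Laying out the bags $\{u_1,u_2,v_i,b_i\}$ along a path (all sharing $\{u_1,u_2\}$) gives a path decomposition of width $3$, so $\pw(H')\le 4$ as required, consistently with Proposition~\ref{prop:graph_parameter}. Because the weights $a_i$, $B$, and $N$ are polynomially bounded in the integers of the \textsc{Partition} instance, the reduction establishes only weak NP-hardness, matching the claim.

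The one genuinely delicate point — and the step I would be most careful about — is decoupling the two roles of the gadget: the hub balance must be the \emph{sole} quantity that determines the minimum utility, while every other agent (each $v_i$ and its buddy) must be guaranteed utility at least $W/2$ no matter how the split is chosen. The buddy edge accomplishes exactly this, handing each $v_i$ a private positive contribution $B\ge W/2$ that is invisible to the hubs; it lifts the item vertices above the threshold without perturbing the measured quantity $\min\{\ut_{\PP}(u_1),\ut_{\PP}(u_2)\}$, and it does so while adding only pendant vertices, which is what keeps both $\tw(H')=2$ and the pathwidth small.
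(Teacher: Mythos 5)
Your proposal is correct and follows essentially the same route as the paper: a reduction from \textsc{Partition} with two hub vertices joined to every item vertex by an edge of weight $a_i$, a large negative hub--hub edge forcing the hubs into distinct coalitions, and a per-item gadget that lifts each item vertex's utility above the threshold $W/2$ so that only the hub utilities constrain the minimum. The only difference is the lifting gadget --- you attach a pendant buddy via an edge of weight $W$, whereas the paper attaches two copies of each $v_{a_i}$ forming a triangle with weights $W/4\pm a_i/2$ --- and both choices keep the graph series-parallel with pathwidth at most $4$.
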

\begin{proof}
We give a reduction from \textsc{Parition} as in the proof of Theorem~\ref{thm:vc:Envy-free}, 
though we adopt a bit different graph from $H$. 
For each $v_{a_i}$ in $V_A$, we create two copies of $v_{a_i}$, denoted by $v'_{a_i}$ and $v''_{a_i}$ respectively, such that they form a clique.
For each $a_i\in A$, we define the weights $w_{v_{a_i} v'_{a_i}}=w_{v_{a_i} v''_{a_i}}=W/4-a_i/2$ and the weight  $w_{v'_{a_i} v''_{a_i}}=W/4+a_i/2$.
Without loss of generality, we can assume that each $a_i$ is even.
Let $V'_A=\{v'_a\mid a\in A\}$ and $V''_A=\{v''_a\mid a\in A\}$ and let $H'$ be the constructed graph (see Fig. \ref{fig:Egal_Partition_tw}).
For the graph $H'$, if we set $X_i=\{u_1, u_2, v_{a_i}, v'_{a_i},v''_{a_i}\}$ for each $a_i$ and connect $X_i$ and $X_{i+1}$ by an edge, we can construct a path decomposition of width at most $4$.
Also, it is easy to show that $H'$ is a series-parallel graph, and hence the treewidth of $H'$ is $2$.

 In the following, we show that there exists a partition $(A_1, A_2)$ such that $\sum_{a\in A_1}a = \sum_{a\in A_2}a = W/2$ if and only if there exists a partition $\PP'$ in $H'$ such that $\min_{v\in V(H')}\ut_{\PP'}(v)= W/2$.

Given a partition $(A_1, A_2)$ of $A$ such that $\sum_{a\in A_1}a = \sum_{a\in A_2}a = W/2$, we set $V_{A_i}=\{v_{a}\in V_A\mid a\in A_i\}$,  $V'_{A_i}=\{v'_{a}\in V'_A\mid a\in A_i\}$, and  $V''_{A_i}=\{v''_{a}\in V''_A\mid a\in A_i\}$ for $i\in \{1, 2\}$.
Let $\PP=\{V_{A_1}\cup V'_{A_1}\cup V''_{A_1}\cup \{u_1\}, V_{A_2}\cup V'_{A_2}\cup V''_{A_2} \cup\{u_2\}\}$ be a partition in $H'$.
By the definition of $H'$, we have $\ut_{\PP}(v)=W/2$ for every $v\in V(H')$.

Conversely, we are given a partition $\PP$ such that $\min_{v\in V(H')}\ut_{\PP'}(v)= W/2$.
If $\PP$ has a coalition that contains both $u_1$ and $u_2$, the utilities of $u_1$ and $u_2$ are less than $0$.
Thus, we suppose that there are two coalitions $C_1$, $C_2$ in $\PP$ such that $u_1\in C_1$ and $u_2\in C_2$.
For each $a\in A$, if $v_a$, $v'_a$, and $v''_a$ belong to a different coalition from the other two, the utility of each is strictly less than $W/2$.
Moreover, if $v_a\in V_A$ belongs to neither $C_1$ nor $C_2$, $\ut_{\PP}(v_a)<W/2$.
Thus, $v_a$, $v'_a$, and $v''_a$ belong to the same coalition of either $C_1$ or $C_2$.

Since the utilities of $u_1$ and $u_2$ are $W/2$, it holds  that $\sum_{v\in N(u_1)} w_{u_1v}=\sum_{v\in N(u_2)} w_{u_2v}=W/2$.
Therefore, if we set $A_i=\{a\in A\mid v_a\in N(u_i)\}$, $(A_1, A_2)$ is a partition satisfying   $\sum_{a\in A_1}a = \sum_{a\in A_2}a = W/2$.
\end{proof}

Note that the pathwidth and the treewidth of $H'$ are bounded, but the vertex cover number is not bounded.
We can similarly show 
that \textsc{Max Egalitarian} is also weakly NP-hard on bounded vertex cover number graphs by using the reduced graph $H''$ in Fig.~\ref{fig:Egal_Partition_vc}.

\begin{theorem}\label{thm:vc:Egal}
\textsc{Max Egalitarian} is weakly NP-hard on 2-apex graphs of vertex cover number $4$ even if the preferences are symmetric.
\end{theorem}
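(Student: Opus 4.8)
The plan is to give a reduction from \textsc{Partition}, reusing the overall strategy of Theorems~\ref{thm:vc:Envy-free} and~\ref{thm:tw:Egal} but redesigning the gadget so that all edges are covered by only four vertices. Given $A=\{a_1,\dots,a_n\}$ with $W=\sum_i a_i$, I would keep the convention $a_1\le\cdots\le a_n<W/2$ (so that $W$ is even and the utility targets below are integral), create an independent set $V_A=\{v_{a_1},\dots,v_{a_n}\}$, and add four hub vertices: two \emph{selectors} $u_1,u_2$ and two \emph{boosters} $w_1,w_2$. Each $v_{a_i}$ is joined to $u_1$ and $u_2$ by edges of weight $a_i$, and to $w_1$ and $w_2$ by edges of weight $W/2-a_i$. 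Finally I add the two hub edges $\{u_1,u_2\}$ and $\{w_1,w_2\}$, each of weight $-M$ for an $M$ exceeding the sum of all positive edge weights. Call this graph $H''$. The target for \textsc{Max Egalitarian} is $\min_{v}\ut_{\PP}(v)=W/2$.

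Two structural facts are immediate. First, $\{u_1,u_2,w_1,w_2\}$ is a vertex cover, since $V_A$ is independent and every edge meets a hub; a short check shows no three vertices suffice, so $\vc(H'')=4$. Second, deleting the two selectors leaves $V_A\cup\{w_1,w_2\}$ together with the edges $v_{a_i}w_1$, $v_{a_i}w_2$ and $w_1w_2$, i.e.\ $K_{2,n}$ with one extra edge between its two centres, which is planar; hence $H''$ is $2$-apex.

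For the forward direction, from a balanced partition $(A_1,A_2)$ I would form the two coalitions $V_{A_1}\cup\{u_1,w_1\}$ and $V_{A_2}\cup\{u_2,w_2\}$, so that the two weight-$(-M)$ edges lie across the cut and never contribute. Then each element receives $a_i$ from its selector plus $W/2-a_i$ from its booster, for utility exactly $W/2$; each selector receives $\sum_{a\in A_i}a=W/2$; and each booster receives $\sum_{a\in A_i}(W/2-a)=(|A_i|-1)W/2$. Here the assumption $a_n<W/2$ enters: since each side sums to $W/2$ while every element is strictly below $W/2$, we have $|A_i|\ge 2$, so the boosters also reach $W/2$. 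Thus $\min_v\ut_{\PP}(v)=W/2$.

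The crux is the reverse direction. Assuming a partition $\PP$ with $\min_v\ut_{\PP}(v)\ge W/2$, the two weight-$(-M)$ edges force $u_1,u_2$ into distinct coalitions and likewise $w_1,w_2$, so any single coalition contains at most one selector and at most one booster; consequently each $v_{a_i}$ has utility at most $a_i+(W/2-a_i)=W/2$. Hence every element attains exactly $W/2$ and must share its coalition with one selector \emph{and} one booster. Since only two selectors and two boosters exist, and each $u_j$ needs in-coalition element weight at least $W/2$ to meet its own bound (its sole positive neighbours are the elements), the only admissible configuration is two coalitions, each comprising one selector, one booster, and a block of elements. Writing $s_j$ for the element-weight sum on side $j$, the selector bounds give $s_1,s_2\ge W/2$ with $s_1+s_2=W$, forcing $s_1=s_2=W/2$ and thus a valid \textsc{Partition} solution. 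I expect this case analysis — ruling out stray coalitions in which a selector or booster is split off from the elements, and pinning down that no element can exceed $W/2$ — to be the main obstacle, while the weight bookkeeping and the graph-parameter claims are routine.
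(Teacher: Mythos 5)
Your proposal is correct and follows essentially the same route as the paper: a reduction from \textsc{Partition} using two ``selector'' hubs with weights $a_i$ joined by a large negative edge and two ``booster'' hubs with weights $W/2-a_i$, with the identical forward construction and the same counting argument (each selector needs in-coalition element weight at least $W/2$, and these sum to at most $W$) for the converse. The only deviations are cosmetic: you add a negative edge between the two boosters, which the paper omits because its reverse direction only ever invokes the constraints on the selectors.
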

\begin{proof}
We give a reduction from \textsc{Parition} as in the proof of Theorem~\ref{thm:vc:Envy-free}.
Without loss of generality, we suppose  $a_{n}<W/2$.
To show this, we modify the graph $H'$ in Theorem~\ref{thm:vc:Envy-free}.
We add two super vertices $u_3$ and $u_4$ connecting to all vertices in $V_A$, respectively. 
For each edge $\{a_i, u_j\}$ where $i\in \{1,\ldots, n\}$ and $j\in \{3,4\}$, we set the weight $w_{a_i u_j}=W/2-a_i$.
Let $H''$ be the constructed graph (see Fig. \ref{fig:Egal_Partition_vc}).
We can observe that $H''$ is a 2-apex graph because a graph obtained from $H''$ by deleting $u_3,u_4$ is planar.
Moreover, $\{u_1,u_2,u_3,u_4\}$ is a vertex cover of size four in $H''$.

Then, we show that there exists a partition $(A_1, A_2)$ such that $\sum_{a\in A_1}a = \sum_{a\in A_2}a = W/2$ if and only if there exists a partition $\PP$ in $H''$ such that $\min_{v\in V(H')} \ut_{\PP}(v)= W/2$.

Given a partition $(A_1, A_2)$ of $A$ such that $\sum_{a\in A_1}a = \sum_{a\in A_2}a = W/2$, we set $V_{A_i}=\{v_{a}\in V_A\mid a\in A_i\}$ for $i\in \{1, 2\}$.
Let $\PP=\{V_{A_1}\cup \{u_1,u_3\}, V_{A_2}\cup\{u_2,u_4\}\}$ be a partition in $H''$.
By the definition of $H'$, we have $\ut_{\PP}(v)=W/2$ for every $v\in V(H')$.
For $j\in \{1,2\}$, it holds that $\ut_{\PP}(u_j)=\sum_{a\in A_j} a=W/2$.
Since   $a_{n}<W/2$, it holds that  $|A_1|, |A_2|\ge 2$  and then we have $\ut_{\PP}(u_j)\ge W/2 $.
Moreover, since any $v_a$  has either $u_1$ and $u_3$ or $u_2$ and $u_4$ as neighbors, $\ut_{\PP}(v_a)=W/2 $ holds.

Conversely, we are given a partition $\PP$ such that $\min_{v\in V(H'')}\ut_{\PP}(v)= W/2$.
If $\PP$ has a coalition that contains both $u_1$ and $u_2$, the utilities of $u_1$ and $u_2$ are less than $0$.
We suppose that there are two coalitions $C_1$, $C_2$ in $\PP$ such that $u_1\in C_1$ and $u_2\in C_2$.

Since the utilities of $u_1$ and $u_2$ are $W/2$, it holds  that $\sum_{v\in N(u_1)} w_{u_1v}=\sum_{v\in N(u_2)} w_{u_2v}=W/2$.
Therefore, if we set $A_i=\{a\in A\mid v_a\in N(u_i)\}$, $(A_1, A_2)$ is a partition satisfying   $\sum_{a\in A_1}a = \sum_{a\in A_2}a = W/2$.
Note that $N(u_1)=N(u_2)\subseteq V_A$.
\end{proof}

Aziz et al. show that {\em asymmetric} \textsc{Max Egalitarian} is strongly NP-hard~\cite{Aziz2013}.
We show that  {\em symmetric} \textsc{Max Envy-free} and {\em symmetric} \textsc{Max Egalitarian} remain to be strongly NP-hard.
To show this, we give a reduction from \textsc{3-Partition}, which is strongly NP-complete~\cite{GJ1979}.
\begin{theorem}\label{thm:Envy-Egal:strongNP}
\textsc{Max Envy-free} and \textsc{Max Egalitarian}  are strongly NP-hard even if the preferences are symmetric.
\end{theorem}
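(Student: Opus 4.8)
The plan is to reduce from \textsc{3-Partition} and to build a \emph{single} symmetric instance that certifies hardness of both problems at once. By strong NP-completeness we may assume the standard normalization $B/4 < a_i < B/2$ for all $i$, so that any sub-collection of $A$ summing to exactly $B$ has exactly three elements; after scaling everything by $2$ we may also take each $a_i$ even. This keeps all the weights below integral and polynomially bounded in $n$ and $B$, which is exactly what upgrades the result to \emph{strong} hardness. The construction introduces $n$ super vertices $u_1,\dots,u_n$ (one per target part), and for each element $a_i$ a triple $\{v_{a_i},v'_{a_i},v''_{a_i}\}$ forming a clique with $w_{v_{a_i}v'_{a_i}}=w_{v_{a_i}v''_{a_i}}=B/2-a_i/2$ and $w_{v'_{a_i}v''_{a_i}}=B/2+a_i/2$, i.e. the gadget of Theorem~\ref{thm:tw:Egal} with $W/2$ replaced by the target $B$. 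I connect each base $v_{a_i}$ to every $u_j$ by an edge of weight $a_i$, and make $\{u_1,\dots,u_n\}$ a clique of very negative edges of common weight $-N$ with $N:=n^2B+1$.

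The intended (``good'') partition places the whole triple of $a_i$ together with $u_j$ exactly when $a_i$ is assigned to part $j$. A direct computation as in Theorem~\ref{thm:tw:Egal} gives $\ut_{\PP}(v)=B$ for every triple vertex, while $\ut_{\PP}(u_j)=\sum_{i\in A_j}a_i=B$. Thus a yes-instance yields a partition in which every agent has utility exactly $B$; its social welfare equals $10nB$, and a short case check (a super vertex replacing another super vertex recovers only $\sum_{i\in A_l}a_i=B$, an element or a copy placed elsewhere gains nothing) shows it is envy-free. So a yes-instance produces both an egalitarian partition of value $B$ and an envy-free partition of social welfare $10nB$.

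For the egalitarian converse I would show that any $\PP$ with $\min_v\ut_{\PP}(v)\ge B$ essentially is a good partition. Each copy $v'_{a_i}$ (and $v''_{a_i}$) has maximum attainable utility $B$, reached only when both of its triple-neighbours share its coalition, so every triple stays intact. The base $v_{a_i}$ then has utility at most $(B-a_i)+a_i=B$, attained only if it also shares its coalition with exactly one super vertex; and since $N$ exceeds every achievable positive utility, two super vertices cannot coexist (one would have utility $\le nB-N<B$). Writing $A_j$ for the bases grouped with $u_j$, every element is placed, each $\ut_{\PP}(u_j)=\sum_{i\in A_j}a_i\ge B$, and $\sum_j\ut_{\PP}(u_j)=\sum_i a_i=nB$, forcing every part-sum to be $B$; the normalization then gives $|A_j|=3$.

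For the envy-free converse I would first prove that the maximum possible social welfare of the instance is exactly $10nB$: keeping all triples intact with each base attached to a single super vertex is optimal, whereas any coalition containing $s\ge2$ super vertices contributes at most $s\cdot nB-N\binom{s}{2}<0$ because $N>n^2B$, so merged super vertices are strictly suboptimal. Hence an envy-free partition of social welfare $\ge 10nB$ again has intact triples, every base with exactly one super vertex, and the super vertices mutually separated; but this alone fixes only the coalition \emph{shape}, giving some assignment with $\sum_j\sum_{i\in A_j}a_i=nB$, not a balanced one. The balance is supplied by envy-freeness: since $u_j$ is adjacent to every base with the same weight $a_i$ as $u_l$, replacing $u_l$ by $u_j$ in $C_{u_l}$ would give $u_j$ utility $\sum_{i\in A_l}a_i$, so envy-freeness forces $\sum_{i\in A_j}a_i\ge\sum_{i\in A_l}a_i$ for all $j,l$, hence all part-sums equal $B$ and $|A_j|=3$. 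The main obstacle is precisely this last step: social welfare alone pins down the shape but not the balance, and it is the symmetric, uniform adjacency of the super vertices to the bases that lets envy-freeness enforce equal part-sums; aligning the three utility targets (copies, bases, super vertices) at the common value $B$ with integral, polynomially bounded weights, and choosing $N$ large enough to rule out merged super vertices in \emph{both} the egalitarian and the social-welfare accounting, is the calculation requiring the most care.
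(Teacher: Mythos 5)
Your proposal is correct and follows essentially the same route as the paper: a reduction from \textsc{3-Partition} obtained by scaling up the constructions of Theorems~\ref{thm:vc:Envy-free} and~\ref{thm:tw:Egal} to $m$ super vertices joined pairwise by large negative edges, with the triple gadget enforcing the egalitarian bound and envy-freeness among the super vertices enforcing equal part-sums. The only (cosmetic) difference is that you use a single instance with triple gadgets for both problems, whereas the paper keeps two separate constructions (the bipartite graph $H$ for \textsc{Max Envy-free} and the triple-gadget graph $H'$ for \textsc{Max Egalitarian}); your more explicit choice of the negative weight $N=n^2B+1$ and the accompanying calculations are sound.
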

\begin{proof}
We first explain a reduction for \textsc{Max Envy-free}.
For $H$ in Theorem \ref{thm:vc:Envy-free}, we set $n=3m$. Then we add $m-2$ super vertices $\{u_3, \ldots, u_{m}\}$ connecting to every vertex $v_{a_i}$ in $V_A$ by an edge of weight $a_i$.
Moreover, we connect $u_p$ and $u_q$ with an edge of weight $-mB$ for $p, q\in \{1, \ldots, m\}$ and $p\neq q$ where $B=\sum_{a\in A}a/n$. Note that $\{u_1, \ldots, u_{m}\}$ forms a clique. 
The number of vertices in the constructed graph is $m+3m=4m$.
It is easily seen that there is a partition  $(A_1, \ldots, A_{m})$ such that $|A_i|=B$ for each $i$ if and only if there is an envy-free partition with social welfare at least $2mB$ in $H$ as in the proof of Theorem \ref{thm:vc:Envy-free} . 

Next, we explain a reduction for \textsc{Max Egalitarian}.
For  $H'$ in Theorem \ref{thm:tw:Egal}, we set $n=3m$.
We create $m$ super vertices $\{u_3, \ldots, u_{m}\}$ connecting to every vertex $v_{a_i}$ in $V_A$ by an edge of weight $a_i$ in $H'$.
Moreover, we connect $u_p$ and $u_q$ with an edge of weight $-mB$ for  $p, q\in \{1, \ldots, m\}$ and $p\neq q$.
Finally, we change the weights of $\{v_{a_i}, v'_{a_i}\}$ and $\{v_{a_i}, v'_{a_i}\}$ to $B/2-a_i/2$ and the weight of $\{v'_{a_i}, v''_{a_i}\}$ to $B/2+a_i/2$.
Then we can observe that there is a partition  $(A_1, \ldots, A_{m})$ such that $|A_i|=B$ for each $i$ if and only if there is a partition partition in $H'$ such that $\min_{v\in V(H')}\ut_{\PP'}(v)= B$  as in the proof of Theorem \ref{thm:tw:Egal}. 
\end{proof}

Since $\tw(G)\le \vc(G)$, \textsc{Max Envy-free} is weakly NP-hard on graphs of $\tw(G)=2$ by Theorem \ref{thm:vc:Envy-free}. 
Also, \textsc{Max Egalitarian} is weakly NP-hard on graphs of $\tw(G)= 2$ by  Theorem \ref{thm:tw:Egal}.
However, we show that symmetric \textsc{Max Envy-free} and symmetric \textsc{Max Egalitarian}  on trees, which are of treewidth $1$,  are solvable in linear time.
Indeed, we can find an envy-free and maximum egalitarian partition with maximum social welfare.
Such a partition consists of connected components of a forest obtained by removing all negative edges from an input  tree.
\begin{theorem}\label{thm:tree:Egal}
Symmetric \textsc{Max Envy-free} and symmetric \textsc{Max Egalitarian} are solvable in linear time on trees.
\end{theorem}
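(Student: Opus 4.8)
The plan is to exhibit a single partition $\PP^*$ that is simultaneously envy-free, maximum egalitarian, and of maximum social welfare, and then to note that it is computable in linear time. Following the remark preceding the statement, $\PP^*$ is obtained by deleting every negative-weight edge from the input tree $T$ and taking the coalitions to be the connected components of the resulting forest $F$. (Recall that zero-weight edges are already absent.) Building $F$ and computing its components with one DFS/BFS traversal costs $O(n)$ time, since a tree has $n-1$ edges, so the running-time claim is immediate once $\PP^*$ is shown to satisfy all the required properties.

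The crux of the argument, and the only place where the tree hypothesis is used, is the following observation. Since $T$ is a tree, the unique path between adjacent vertices $u$ and $v$ is the edge $\{u,v\}$ itself; hence after deleting the negative edges, $u$ and $v$ lie in the same component of $F$ if and only if $w_{uv}>0$. Consequently, for every agent $u$,
\[
\ut_{\PP^*}(u)=\sum_{v\in N(u)\cap C_u} w_{uv}=\sum_{v\in N(u):\, w_{uv}>0} w_{uv}.
\]
I would then argue that this is the largest utility $u$ could attain in \emph{any} partition, because the coalition of $u$ under an arbitrary partition meets $N(u)$ in some subset of $N(u)$, and including a negative neighbour or omitting a positive one only decreases the sum. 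Thus $\PP^*$ lets every agent reach its individual maximum \emph{simultaneously}, which is possible precisely because each edge of a tree is a bridge, so there is no conflict between keeping positive edges inside coalitions and cutting negative ones.

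All three desired properties then follow from this uniform per-agent optimality. For envy-freeness, the right-hand side of the envy inequality is the utility $u_1$ would obtain in $(C_{u_2}\setminus\{u_2\})\cup\{u_1\}$, whose intersection with $N(u_1)$ is a subset of $N(u_1)$; it is therefore at most $\ut_{\PP^*}(u_1)$, the left-hand side, so the strict inequality fails and no agent envies another. For maximum egalitarian, I would fix any partition $\PP$ and let $u_0$ minimize $\ut_{\PP^*}$; then $\min_u \ut_{\PP}(u)\le \ut_{\PP}(u_0)\le \ut_{\PP^*}(u_0)=\min_u \ut_{\PP^*}(u)$, so $\PP^*$ maximizes the minimum utility. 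For maximum social welfare, recall that the welfare equals twice the total weight of edges inside coalitions; under $\PP^*$ exactly the positive edges lie inside coalitions, yielding $2\sum_{w_e>0} w_e$, whereas the in-coalition edges of any partition form a subset of $E$ and hence cannot exceed this bound.

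The hard part is conceptual rather than computational: recognizing that the bridge structure of a tree makes the greedy "cut every negative edge" partition globally optimal for all three objectives at once, and then checking each objective against the single fact that every agent attains its individual maximum. On a general graph this breaks down, since a negative edge may have to be cut even when an alternative path keeps its endpoints together, so the per-agent maxima need not be jointly achievable; this is exactly the obstruction that the weak NP-hardness results for treewidth $2$ exploit.
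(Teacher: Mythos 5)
Your proposal is correct and follows exactly the construction the paper sketches in the remark preceding the theorem (take the connected components of the forest obtained by deleting all negative edges); the paper itself gives no further proof, and your write-up simply supplies the details of that same argument, correctly using the bridge property of tree edges to show every agent attains its individual maximum utility simultaneously.
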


Note that linear-time solvability does not hold for asymmetric cases, though asymmetric Max Egalitarian on trees can be solved in near-linear time.
\begin{theorem}\label{thm:asym:tree}
\textsc{Max Egalitarian} can be solved in time $O(n \log W)$ on trees. 
\end{theorem}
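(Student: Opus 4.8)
The plan is to combine a binary search over the egalitarian objective value with a bottom-up dynamic program that tests feasibility on the tree. The first thing I would establish is a normal form for coalitions. Since the underlying graph $T=(V,E,w)$ is a tree, the utility $\ut_{\PP}(u)=\sum_{v\in N(u)\cap C_u}w_{uv}$ depends only on the tree-edges whose both endpoints lie in $C_u$. Hence if a coalition $C$ induces a disconnected subtree $T[C]$, splitting $C$ into the connected components of $T[C]$ changes no agent's utility. So I may assume every coalition is a connected subtree, and a partition then corresponds exactly to choosing a set of edges to \emph{cut}: the coalitions are the remaining connected components, and $\ut_{\PP}(u)$ is the sum of $w_{uv}$ over the retained edges incident to $u$. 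The task becomes: choose a set of edges to keep so as to maximise $\min_{u\in V}\ut_{\PP}(u)$.

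Because all weights are integers, $\min_u \ut_{\PP}(u)$ is an integer; the all-singleton partition gives every utility $0$, and the optimum is bounded above by $W=\sum_{e\in E}|w_e|$, so the answer lies in $\{0,1,\ldots,W\}$. Moreover feasibility is monotone: if some partition makes every utility at least $t$, the same partition witnesses every $t'\le t$. I would therefore binary search over the integer threshold $t$, costing $O(\log W)$ iterations, and for each fixed $t$ decide in linear time whether a partition with $\ut_{\PP}(u)\ge t$ for all $u$ exists.

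For the feasibility test I root $T$ at an arbitrary vertex and process it bottom-up. For a vertex $v$ with parent $p$, the edge $\{v,p\}$ either is kept (adding $w_{vp}$ to $\ut(v)$ and $w_{pv}$ to $\ut(p)$) or is cut. I compute two Boolean flags $\mathrm{keep}(v)$ and $\mathrm{cut}(v)$, meaning that the subtree of $v$ admits a choice of its internal edges satisfying the threshold at every descendant and at $v$, under the assumption that the parent edge is kept, respectively cut. The key point is that the parent-edge state affects only the single term $w_{vp}$ of $\ut(v)$ and no descendant, so the internal choices are made independently of it by simply maximising $\ut(v)$: each child $c$ is forced to keep if only $\mathrm{keep}(c)$ holds, forced to cut if only $\mathrm{cut}(c)$ holds, and is free otherwise, in which case it is kept exactly when $w_{vc}\ge 0$. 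Letting $S(v)$ be the resulting maximum of $\sum_{\text{kept child }c}w_{vc}$ (with failure if some child has neither flag), I set $\mathrm{cut}(v)$ true iff $S(v)\ge t$ and $\mathrm{keep}(v)$ true iff $S(v)+w_{vp}\ge t$. The whole instance is feasible for $t$ iff $S(r)\ge t$ at the root.

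Finally I would verify correctness and complexity. The independence argument shows the per-child greedy indeed maximises $\ut(v)$ subject to all subtree constraints, so the flags are correct and a global partition meeting threshold $t$ exists iff the root condition holds. Each feasibility pass touches every edge a constant number of times, running in $O(n)$, and the binary search multiplies this by $O(\log W)$, giving the claimed $O(n\log W)$ bound. The step I expect to be most delicate is the bookkeeping forced by asymmetry: since $w_{vp}$ and $w_{pv}$ generally differ, the parent-edge decision must be charged to the two incident agents separately, which is precisely why two flags $\mathrm{keep}(v),\mathrm{cut}(v)$ (rather than a single cut/keep value) are needed, and why the clean rule ``remove all non-positive edges'' valid in the symmetric case of Theorem~\ref{thm:tree:Egal} no longer suffices.
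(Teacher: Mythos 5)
Your proposal is correct and follows essentially the same strategy as the paper: a binary search over the integer threshold combined with a linear-time bottom-up feasibility test on the tree, where each leaf/child edge is classified as forced-keep, forced-cut, or free (decided by the sign of the weight seen from the parent). The paper phrases this as repeated leaf contraction into a self-loop that accumulates the forced contribution, while you phrase it as a rooted DP with two Boolean flags and the running sum $S(v)$, but the underlying case analysis and decisions are identical.
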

\begin{proof}
In this proof, we design an algorithm for \textsc{Max Egalitarian} on trees with self-loops, which is a slightly wider class of graphs. Given a tree $T$ with self-loops and a non-negative value $W$, our algorithm determines whether there exists a partition of $T$ such that the utility of every agent is at least $W$ in linear time. The idea is that  we can immediately answer ``No'' by focusing on a leaf, or we can reduce the given $T$ to a smaller tree $T'$ with self-loops whose answer is equivalent to the one for $T$. 

Given a tree $T$, we consider a leaf $u$ and its adjacent vertex $v$. Let $w(v)$,  $w(u,v)$ and $w(v,u)$ be the weights of self-loop of $v$, edges $(u,v)$ and $(v,u)$, respectively. We consider the following four cases: (i) $w(u)<W$ and $w(u)+w(u,v)< W$ hold, (ii) $w(u)<W$ and $w(u)+w(u,v) \ge W$ hold, (iii) $w(u)\ge W$ and $w(u)+w(u,v)<W$ hold, and (iv) $w(u)\ge W$ and $w(u)+w(u,v) \ge W$ hold. In case (i), $u$ can be isolated or can be with $v$, but in any cases, $u$'s utility is smaller than $W$; the answer is obviously ``No''. In case (ii), in order to give $u$ utility at least $W$, $u$ and $v$ should belong to a same coalition, which implies that $v$ receives utility $w(v,u)$. This can be interpreted that $u$ is contracted into $v$ and the weight of self-loop $(v,v)$ is updated to $w(v)+w(v,u)$. In case (iii), in order to guarantee at least $W$ utility for $u$, $u$ should be isolated. Then, we simply consider the problem for $T'$ obtained from $T$ by deleting $u$. In case (iv), $u$ can have utility at least $W$ whichever $u$ belongs to a same coalition with $v$. We then consider two subcases: (iv-1) $w(v,u)>0$ and (iv-2) $w(v,u)<0$ (if $w(v,u)=0$, $u$ does not affect the partition. We can ignore $v$). 
If (iv-1), $v$ can reserve utility $w(v,u)$ by belonging to a same coalition with $u$; we can apply the same argument with (ii).   
If (iv-2), it is better that $v$ does not belong to a same coalition with $u$; we can apply the same argument with (iii). 

By the above observation, we can immediately say ``No'', or obtain $T'$ with one smaller vertices. Since the above check can be done in $O(1)$, the decision problem can be done in $O(n)$ time, where $n$ is the number of vertices. By applying the binary search, we can obtain a maximum egalitarian partition. 
\end{proof}

Theorems \ref{thm:tw:Egal} and \ref{thm:vc:Egal} mean that \textsc{Max Egalitarian} is weakly NP-hard even on bounded treewidth graphs. On the other hand, we show that there is a pseudo FPT algorithm for \textsc{Max Egalitarian} when parameterized by treewidth.
\begin{theorem}\label{thm:pseudo_tw:Egal}
Given a tree decomposition of width $\tw$, \textsc{Max Egalitarian} can be solvable in time $(\tw W)^{O(\tw)}n$ where $W=\max_{u \in V} \sum_{v\in N(u)}|w_{uv}|$.
\end{theorem}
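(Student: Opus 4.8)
The plan is to extend the dynamic programming of Theorem~\ref{thm:treewidth:hedonic} from the additive social-welfare objective to the $\min$ objective of \textsc{Max Egalitarian}. First I would convert the given tree decomposition into a nice tree decomposition with introduce edge nodes in linear time, so that every edge of $G$ is processed exactly once and, crucially, so that at a \emph{forget} node all edges incident to the forgotten vertex have already been introduced. This last property is what makes the egalitarian objective tractable: the moment a vertex leaves the bag its utility is \emph{finalized}, so it can be compared against the running minimum and then discarded.

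The essential difficulty, compared with the utilitarian case, is that the objective does not decompose additively across the decomposition; to evaluate $\min_{v}\ut_{\PP}(v)$ we must know the final utility of \emph{each} individual agent, yet an agent accumulates utility from edges scattered over its subtree. To cope with this I would enlarge the DP state: besides a partition $\PP_i$ of the bag $X_i$ (as in Theorem~\ref{thm:treewidth:hedonic}), the state also records, for every $x\in X_i$, its \emph{partial utility} $u_x$, namely the sum of the weights $w_{xy}$ over neighbours $y$ placed in the same coalition as $x$ by edges introduced so far in $G_i$. Since $|u_x|\le\sum_{y\in N(x)}|w_{xy}|\le W$, each coordinate ranges over the $2W+1$ integers in $[-W,W]$, so there are $W^{O(\tw)}$ partial-utility vectors and $\tw^{O(\tw)}$ bag partitions, giving $(\tw W)^{O(\tw)}$ states per node. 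I would define $A_i[\PP_i,\vec u]$ to be the maximum, over all partitions of $V(G_i)$ consistent with $(\PP_i,\vec u)$, of the minimum finalized utility among the vertices already forgotten in $G_i$ (and $+\infty$ when none have been forgotten).

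The transitions follow the node types. At a leaf the only state is the empty one with value $+\infty$. An \emph{introduce vertex} $v$ node adds $v$, branching over the coalition it joins, with initial partial utility $0$. An \emph{introduce edge} $(u,v)$ node leaves the table unchanged when $u,v$ lie in different coalitions of $\PP_i$, and otherwise increases the partial utility of $u$ by $w_{uv}$ and that of $v$ by $w_{vu}$; because each edge is introduced once, no double counting occurs. At a \emph{forget} $v$ node the utility of $v$ is complete, so $A_i[\PP_i,\vec u]=\max\min\bigl(A_j[\PP_j,\vec u'],\,c\bigr)$, the maximum taken over all placements of $v$ (i.e.\ all child states $\PP_j$ projecting to $\PP_i$) and over the finalized utility $c$ of $v$, thereby folding $v$ into the running minimum. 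At a \emph{join} node the two subtrees introduce disjoint edge sets, so the partial utilities add while the forgotten-vertex minima combine by $\min$; hence $A_i[\PP_i,\vec u]=\max_{\vec u_1+\vec u_2=\vec u}\min\bigl(A_{j_1}[\PP_i,\vec u_1],\,A_{j_2}[\PP_i,\vec u_2]\bigr)$. The answer is read off at the root as $A_r[\emptyset]$, and a partition attaining it is recovered by standard traceback.

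For the running time, every node is handled in time polynomial in the state count, and the join node is the bottleneck: for each of the $\tw^{O(\tw)}$ partitions and each target vector $\vec u$ it convolves over the $W^{O(\tw)}$ splits $\vec u_1+\vec u_2=\vec u$, costing $(\tw W)^{O(\tw)}$ per node and $(\tw W)^{O(\tw)}n$ in total. I expect the main obstacle to be precisely this $\min$-objective handling: arguing that forgetting a vertex finalizes its utility (so that the locality of the decomposition suffices), and that adding the partial-utility coordinates neither double counts edges nor miscombines the forgotten minima at join nodes. The pseudo-polynomial factor $W^{O(\tw)}$ is the price of tracking these per-agent utilities, consistent with the weak NP-hardness of Theorems~\ref{thm:tw:Egal} and~\ref{thm:vc:Egal}.
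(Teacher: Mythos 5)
Your proof is correct and follows essentially the same approach as the paper: the same DP state $(\PP_i,\vec u)$ with partial utilities in $[-W,W]$, the same semantics (max--min over already-forgotten agents), and the same forget/join recurrences. The only difference is bookkeeping: you charge edge weights at introduce-edge nodes so that join nodes simply add utility vectors, whereas the paper charges them at introduce-vertex nodes and subtracts the doubly counted bag-internal edges at join nodes; both are valid and give the same $(\tw W)^{O(\tw)}n$ bound.
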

\begin{proof}
Let $V_i$ be the set of vertices in $X_i$ or the descent of $X_i$ on a tree decomposition.
Then we define DP tables of our dynamic programming.

Let $\PP_i$ be a partition of $X_i$ and ${\bf u}_i$ be a $|X_i|$-dimensional vector whose elements take from $-W$ to $W$, called a {\em utility vector} of $X_i$. 
For $v\in X_i$, the element ${\bf u}_i(v)$ represents the utility of $v$ in $G[V_i]$.
Finally, we define $A_i[\PP_i, {\bf u}_i]$ for each bag $X_i$ by using $\PP_i$ and ${\bf u}_i$ as the maximum minimum utility of an agent in $V_i\setminus X_i$ in $G[V_i]$.
The value of  $A_r[\emptyset, \emptyset]$ is an optimal value for \textsc{Max Egalitarian} in $G$.
In the following, we define the recursive formulas for computing $A_i[\PP_i, {\bf u}_i]$ on a nice tree decomposition.

\paragraph*{Leaf node: }  
We initialize DP tables for each leaf node $i$ as $A_i[\emptyset, \emptyset]=W+1$.
Note that the maximum minimum utility is at most $W$ and once we execute the recursive formula in a forget node,  $A_i[\PP_i, {\bf u}_i]$ becomes at most $W$.

\paragraph*{Introduce vertex $v$ node: }
Let $C_v\in \PP_i$ be  a coalition that contains $v$ in an introduce $v$ node $i$. 
Note that $C_v$ may contain only $v$, that is, $C_v=\{v\}$.
In an introduce $v$ node, an agent $v$ is added to a coalition.
This changes the utilities of agents in  $C_v$.
Also, the utility of $v$ in $G[V_i]$ is the sum of weight of edges between $v$ and agents in $C_v$.
Since every agent in $X_j$ also appears in $X_i$, the maximum minimum utility of an agent in $V_i\setminus X_i$ in $G[V_i]$ does not change.
Therefore, we define the recursive formula as follows:
$A_i[{\PP}_i, {\bf u}_i]=A_j[{\PP}_j, {\bf u}_j]$, 
where $\PP_j=\PP_i\setminus \{C_v\} \cup \{C_v\setminus \{v\}\}$, ${\bf u}_j(u)={\bf u}_i(u)-w_{uv}$ for $u\in C_v\setminus \{v\}$, ${\bf u}_j(u)={\bf u}_i(u)$ for other $u$'s in $X_i\setminus \{v\}$, and $\sum_{u\in N(v)\cap C_v}w_{vu}={\bf u}_i(v)$. Otherwise, we define $A_i[{\PP}_i, {\bf u}_i]=-\infty$ as an invalid case.

\paragraph*{Forget $v$ node: }
In a forget $v$ node, if a vertex $v$ is forgotten, it never appears in $X_i$ and its ancestors on the decomposition tree.
This implies that the utility of $v$ does not change hereafter.
Namely,  the maximum minimum utility among forgotten agent  is stored in $A_i[{\PP}_i, {\bf u}_i]$ in some sense. Thus what we need to do here is to update the minimum by comparing the previous maximum minimum utility with the utility of the newly forgotten agent, which can be the new minimum. Taking the maximum among $\PP_j$ and ${\bf u}_j$, this can be interpreted as the following recursive formula:
\[
A_i[\PP_i, {\bf u}_i]=\max_{\PP_j, {\bf u}_j}\min \{A_j[\PP_j, {\bf u}_j], {\bf u}_j(v)\},
\]
where ${\bf u}_j(u)={\bf u}_i(u)$ for $u\in X_i$  and $\PP_j\setminus \{C_v\}\cup \{C_v\setminus \{v\}\}=\PP_i$. 
The condition  $\PP_j\setminus \{C_v\}\cup \{C_v\setminus \{v\}\}=\PP_i$ means that the coalition to which an agent  belongs in node $j$ is the same as the coalition to which an agent  belongs in node $i$.

\paragraph*{Join node: }
For two children $j_1, j_2$ of a join node $i$, it holds that $X_i=X_{j_1}=X_{j_2}$.
To update $A_i[{\PP}_i, {\bf u}_i]$  in a join node, we first take the minimum of $A_{j_1}[{\PP}_i, {\bf u}_{j_1}]$  and $A_{j_2}[{\PP}_i, {\bf u}_{j_2}]$. 
Note that the maximum minimum utility among forgotten agent until $X_i$ is the minimum of ones until the children nodes. 
Here, for every agent $v\in X_i$, ${\bf u}_i(v)={\bf u}_{j_1}(v)+{\bf u}_{j_2}(v)-\sum_{u\in N(v)\cap C_v}w_{vu}$ must hold. The subtraction avoids the double counting of edges.
Then taking the maximum among ${\bf u}_{j_1}$ and ${\bf u}_{j_2}$ satisfying the above condition,
the recursive formula can be defined as follows:
$$A_i[{\PP}_i, {\bf u}_i]=\max_{{\bf u}_{j_1}+{\bf u}_{j_2}={\bf u}'_{i}} \min \{A_{j_1}[{\PP}_i, {\bf u}_{j_1}], A_{j_2}[{\PP}_i, {\bf u}_{j_2}]\},$$ 
where each element ${\bf u}'_i(v)$ of ${\bf u}'_i$ is defined as ${\bf u}'_i(v)={\bf u}_i(v)-\sum_{u\in N(v)\cap C_v}w_{vu}$. 

Since the size of a DP table of each bag is $(\tw W)^{O(\tw)}$ and  each recursive formula can be computed in time $(\tw W)^{O(\tw)}$, the total running time is $(\tw W)^{O(\tw)}n$.
\end{proof}

Theorem \ref{thm:pseudo_tw:Egal} implies that if $W$ is bounded by a polynomial in $n$,  \textsc{Max Egalitarian}  can be computed in time $n^{O(\tw)}$.

\bibliographystyle{plainurl}
\bibliography{ref}

\end{document}